\documentclass[onecolumn]{IEEEtran} 
\pdfoutput=1
%\usepackage[colorlinks]{hyperref}
%\usepackage{makeidx}
%\makeindex

\usepackage{amsmath}
\usepackage{amsthm}
\usepackage{amssymb}
\usepackage{bm}
\usepackage{xspace}
\usepackage{xcolor}
\usepackage{graphicx}
\usepackage{url}
\usepackage{framed}
\usepackage{float}
\usepackage{rotating}
\usepackage{verbatim}
\usepackage{listings}
\usepackage{lscape}

\usepackage{pgfplots}
\usepackage{pgf}
\usepackage{tikz}
\usetikzlibrary{arrows,shapes.misc,chains,scopes}
\usetikzlibrary{matrix}
\usetikzlibrary{calc}
\usetikzlibrary{fit}

%\pgfplotsset{compat = newest}
\usepackage{pgfplotstable}
\usepackage{booktabs}
\usepackage{colortbl}

\newcommand{\executeiffilenewer}[3]{%
\ifnum\pdfstrcmp{\pdffilemoddate{#1}}%
{\pdffilemoddate{#2}}>0%
{\immediate\write18{#3}}\fi%
}
\newcommand{%
\executeiffilenewer{.svg}{.pdf}%
{inkscape -z -D --file=.svg %
--export-pdf=.pdf --export-latex}%
\input{.pdf_tex}%
}[1]{%
\executeiffilenewer{#1.svg}{#1.pdf}%
{inkscape -z -D --file=#1.svg %
--export-pdf=#1.pdf --export-latex}%
\input{#1.pdf_tex}%
}

\graphicspath{{figures/}}

\theoremstyle{plain}
%[chapter]
\newtheorem{corollary}{Corollary}%[chapter]

\newtheorem{theorem}{Theorem}

\newcounter{algocount}
\newcounter{examplecount}

\newcommand{\bmm}{\begin{matrix}}
\newcommand{\emm}{\end{matrix}}
\newcommand{\bpm}{\begin{pmatrix}}
\newcommand{\epm}{\end{pmatrix}}

\newcommand{\bsbm}{\left[\begin{smallmatrix}}
\newcommand{\esbm}{\end{smallmatrix}\right]}
\newcommand{\bspm}{\left(\begin{smallmatrix}}
\newcommand{\espm}{\end{smallmatrix}\right)}

\newcommand{\bbm}{\begin{bmatrix}}
\newcommand{\ebm}{\end{bmatrix}}

\DeclareMathOperator*{\argmax}{argmax}

\DeclareMathOperator{\expop}{\mathbb{E}}

\usepackage{cite}
\usepackage{printlen}
\usepackage{siunitx}
\usepackage{csquotes}
\usepackage{caption}
\usepackage{amsmath}
\usepackage{standalone}

\title{Information Rates and Error Exponents for Probabilistic Amplitude Shaping}

\author{%
	\IEEEauthorblockN{Rana Ali Amjad\\}
	\IEEEauthorblockA{Institute of Communications Engineering \\
		Technical University of Munich\\
		ranaali.amjad@tum.de}
}
\newcommand{\mc}[1]{\mathcal{#1}}
\newcommand{\ent}[1]{H\left( #1\right)}
\newcommand{\rent}[2]{H_{#1}\left( #2\right)}
\newcommand{\mi}[2]{I\left( #1;#2\right)}
\newcommand{\kl}[2]{D\left( #1 \| #2\right)}
\newcommand{\Pb}[1]{\Pr\left[#1\right]}
\newcommand{\pnd}[1]{{P}_{#1}^{n}}
\newcommand{\typ}[1]{\mathcal{T}_n(#1)}
\newtheorem{definition}{Definition}
\newcommand{\exop}{\mathbb{E}}

\DeclareMathOperator{\supp}{supp}

\graphicspath{{figures/}}

\IEEEoverridecommandlockouts

\pgfplotsset{
width=0.7\textwidth,
height=0.42\textheight
}
\newtheorem{remark}{Remark}

\begin{document}
\maketitle

\begin{abstract}
Probabilistic Amplitude Shaping (PAS) is a coded-modulation scheme in which the encoder is a concatenation of a distribution matcher with a systematic Forward Error Correction (FEC) code. For reduced computational complexity the decoder can be chosen as a concatenation of a mismatched FEC decoder and dematcher. This work studies the theoretic limits of PAS. The classical joint source-channel coding (JSCC) setup is modified to include systematic FEC and the mismatched FEC decoder. At each step error exponents and achievable rates for the corresponding setup are derived.   
\end{abstract}

\section{Introduction}
A code can approach rates close to the channel capacity of a noisy Discrete Memoryless Channel (DMC) only if the empirical distributions of (not too large) subblocks of the code are \enquote{close} to the Cartesian products of the capacity achieving distribution $P_{X^*}$ \cite{shamaiempirical}, e.g., the Additive White Gaussian Noise (AWGN) channel with uniformly distributed equidistant input signal points has a gap of up to 1.53 dB from the channel capacity. Various methods have been proposed to close this shaping gap, see \cite{georgPAS} for a literature review on this topic. One such method is Probabilistic Amplitude Shaping (PAS) \cite{georgPAS}. PAS has been applied in various communication scenarios including \cite{tobiasshaping} and \cite{fabianshaping} showing significant rate gains as compared to other coded modulation techniques.  Furthermore, in \cite{georgPAS}, the authors show that PAS provides a flexible and low complexity mechanism to adapt rates to changing channel conditions. The method has been implemented in submarine optical fiber with record data transmission rates \cite{fieldtrialnokia1,fieldtrialnokia2} and in a German nationwide fiber optic ring \cite{fieldtrialgerman}. It has been proposed for Digital Subscriber Line (DSL) standards \cite{mgfast}.

In this work we study PAS information rates and error exponents. We adapt the classical Joint Source-Channel Coding (JSCC) setup to include salient features of PAS, i.e., the systematic encoding of a non uniform source and the mismatched decoding. This approach differs from the one taken in \cite{bochererrates} where the focus is solely on mismatched decoding over a codebook larger than the set of transmitted codewords.

The paper is organized as follows.  In Sec.~\ref{sec:prelim} we review the key aspects of Gallager's proof of the coding theorem for a non-uniform source \cite{gallagerexponent, gallager1968information}. In Sec. \ref{sec:pas} we introduce PAS  and discuss how one can analyze it using the JSCC framework. In Sec.~\ref{sec:sys} and Sec.~\ref{sec:mis} we consider two JSCC scenarios, each having one modification as compared to the setup in Sec.~\ref{sec:prelim}. Sec.~\ref{sec:sys} deals with a systematic encoder for transmitting messages from a non uniform Discrete Memoryless Source (DMS) over a noisy channel. Sec.~\ref{sec:mis} deals with a mismatched Maximum Aposteriori Probability (MAP) Forward Error Correction (FEC) decoder. Unlike the previous works on mismatched decoding (e.g., \cite{scarlettmismatch, csiszar1995channel}) where the authors dealt with channel mismatch and/or complexity constraints on the decoder, we look at the mismatch of source statistics and associated complexity at the decoder. Sec.~\ref{sec:paserror} discusses the final setup corresponding to PAS.
\section{Preliminaries}\label{sec:prelim}
 Let $P_{Z}$ be a $n$-type probability distribution \cite{geigeroptimal} over  some finite alphabet $\mc{Z}$ for some positive integer $n$. The set of all sequences $z^n \in \mc{Z}^n$ having empirical distribution $P_{Z}$ is known as a type set and denoted by $\typ{P_{Z}}$. The cardinality $\left|\typ{P_{Z}}\right|$ of $\typ{P_{Z}}$ is bounded as \cite[Th.~11.1.3]{cover2006elements}
\begin{align}
\frac{1}{(n+1)^{\left|\mc{Z}\right|}} 2^{n\ent{P_{Z}}}\ &\leq \left|\typ{P_{Z}}\right| \leq 2^{n\ent{P_{Z}}} \label{eq:typsize}
\end{align}
where $\ent{\cdot}$ denotes Shannon entropy. 
Let $P_{\bar{Z}}$ denote another probability distribution over $\mc{Z}$. Then for any $z^n \in \typ{P_{Z}}$, we have \cite[Th.~11.1.2]{cover2006elements}
\begin{align}
P_{\bar{Z}}^n(z^n) = 2^{-n\left(\ent{P_{Z}} + \kl{P_{Z}}{P_{\bar{Z}}}\right)} \label{eq:typprob}
\end{align}
\subsection{Classical Joint-Source Channel Coding Setup}
\begin{figure}
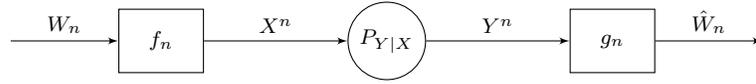

	\centering
	\footnotesize
    \includestandalone{figures/classic/classic}
	\caption{Classical JSCC setup}
	\label{fig:classic}
\end{figure}
Fig.~\ref{fig:classic} shows the classic JSCC setup. The DMC is denoted by $P_{Y|X}$, where $\mc{X}$ and $\mc{Y}$ are finite sets representing the input and output alphabet of the DMC respectively. Although we study DMCs in this paper, the results extend to continuous output alphabets in a straightforward manner. The source message $W_n$ takes values in a finite set $\mc{W}_n$ according to the probability distribution $Q_n$. The JSCC encoder is denoted by $f_n:\mc{W}_n\to\mc{X}^n$ and the JSCC decoder is denoted by $g_n:\mc{Y}^n\to\mc{W}_n$, where $n$ is a positive integer denoting the number of channel uses for transmitting the message $W_n$. The rate of the code is  
\begin{align}
R_n &= \frac{\ent{Q_n}}{n}
\end{align}
We focus on $\ent{Q_n}$ instead of $\log |\mc{W}_n|$ since we allow $W_n$ to be non-uniformly distributed. The average block error probability $\Pb{\hat{W}_n \neq W_n}$ is denoted by $P_{e,n}$. The optimal decoder in the sense of minimizing $P_{e,n}$ for a given code/encoder is the MAP decoder \cite{gallager1968information}. 
\begin{remark}
	$\mathcal{W}_n$ and $Q_n$, and there by their cardinalities,  change with blocklength $n$. This is obvious  in Sec.~\ref{sec:sys} and Sec.~\ref{sec:paserror} when the encoders are systematic but implicit in Sec.~\ref{subsec:gallager} and Sec.~\ref{sec:mis} where the encoders are non-systematic.  
\end{remark}
\subsection{Channel Capacity and Error Exponent}\label{subsec:gallager}
\begin{definition}[Achievable Rate]
	A rate $R$ is achievable if there exists a sequence of encoders $f_n$ and corresponding decoders $g_n$ such that  
	\begin{align}
	P_{e,n} &\overset{n\to\infty}{\longrightarrow} 0 \\
	R_n&\overset{n\to\infty}{\longrightarrow} R
	\end{align}
\end{definition}	
Gallager discussed achievable rates and error exponents for transmitting a non-uniform source message for the setup in Fig. \ref{fig:classic}. 
\begin{theorem}[\protect{\cite[Prob. 5.16]{gallager1968information}}]
	\label{th:gallagernu}
	For the setup in Fig.~\ref{fig:classic}, any channel input distribution $P_X$ and any $n$, there exists an encoder $f_n$ and a corresponding MAP decoder $g_n^{\scriptscriptstyle \text{MAP}}$ s.t.  
	\begin{align}
	P_{e,n} \leq 2^{-nE_G} \label{eq:expnu} 
	\end{align}
	where the error exponent $E_G$ is  
	\begin{align}\label{eq:gallagernonexp}
	E_G = \max_{0 \leq \rho \leq 1} \left[E_0 - \frac{\rho}{n} \rent{\frac{1}{1+\rho}}{Q_n}\right]
	\end{align}	
	where $\rent{\alpha}{\cdot}$ is the Renyi entropy of order $\alpha$ and $E_0$ is the shorthand notation for  
	\begin{align}
	E_0(\rho,P_X) = -\log\sum\limits_{y}\left\{\sum\limits_{x}P_X(x)P_{Y|X}(y|x)^{\frac{1}{1+\rho}}\right\}^{1+\rho}.
	\end{align}
\end{theorem}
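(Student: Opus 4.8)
The plan is to run Gallager's random coding argument, specialized to a MAP decoder with non-uniform priors $Q_n$. I would draw the codewords $f_n(w)$, $w\in\mc{W}_n$, independently from the product distribution $P_X^n$ on $\mc{X}^n$, pair each realization of the code with its own MAP decoder $\MAPdec$, and upper bound the ensemble average of $P_{e,n}=\sum_w Q_n(w)P_{e,n}(w)$, where $P_{e,n}(w)$ is the error probability conditioned on $W_n=w$. Once the ensemble average is bounded by the right-hand side of \eqref{eq:expnu}, at least one encoder $f_n$ must achieve that bound when decoded by its MAP decoder; this averaging step is what converts the ensemble estimate into the existence statement of the theorem.

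The central pointwise estimate is Gallager's bound on the MAP error event. If $w$ was sent and $y^n$ received, the MAP rule errs only if some $w'\neq w$ satisfies $Q_n(w')P_{Y|X}^n(y^n|f_n(w'))\ge Q_n(w)P_{Y|X}^n(y^n|f_n(w))$, so for every $0\le\rho\le 1$ and $s=\tfrac{1}{1+\rho}$ the error indicator is bounded by
\begin{align}
\Bigl[\sum_{w'\neq w}\Bigl(\tfrac{Q_n(w')}{Q_n(w)}\Bigr)^{s}\Bigl(\tfrac{P_{Y|X}^n(y^n|f_n(w'))}{P_{Y|X}^n(y^n|f_n(w))}\Bigr)^{s}\Bigr]^{\rho}. \nonumber
\end{align}
Multiplying by $P_{Y|X}^n(y^n|f_n(w))$, summing over $y^n$, and weighting by $Q_n(w)$ yields a bound on $P_{e,n}$ that depends on the code only through its codewords.

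Next I would take the ensemble expectation. Conditioning on $f_n(w)$ and invoking concavity of $t\mapsto t^{\rho}$ for $0\le\rho\le 1$ (Jensen's inequality), the expectation over the competing codewords passes inside the $\rho$-th power, and each factor $\exop\bigl[P_{Y|X}^n(y^n|f_n(w'))^{s}\bigr]$ reduces to $\sum_{x'}P_X^n(x')P_{Y|X}^n(y^n|x')^{s}$. Averaging over $f_n(w)$ as well and using $s=\tfrac{1}{1+\rho}$ makes the exponents collapse: the channel-dependent part becomes $\sum_{y^n}\bigl(\sum_x P_X^n(x)P_{Y|X}^n(y^n|x)^{s}\bigr)^{1+\rho}$, which factorizes over the $n$ channel uses into $2^{-nE_0(\rho,P_X)}$ by the definition of $E_0$, while the prior-dependent part separates as $\bigl(\sum_{w'}Q_n(w')^{s}\bigr)^{\rho}\sum_w Q_n(w)^{1-s\rho}=\bigl(\sum_w Q_n(w)^{s}\bigr)^{1+\rho}$ because $1-s\rho=s$. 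Since $\tfrac{1}{1-s}\log\sum_w Q_n(w)^{s}=\rent{s}{Q_n}$ with $s=\tfrac{1}{1+\rho}$, this last factor equals $2^{\rho\,\rent{1/(1+\rho)}{Q_n}}$, so that $\exop[P_{e,n}]\le 2^{-n\left(E_0-\frac{\rho}{n}\rent{1/(1+\rho)}{Q_n}\right)}$. Optimizing the free parameter over $0\le\rho\le 1$ gives the exponent $E_G$.

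I expect the bookkeeping in the third step to be the main obstacle. One must carefully keep the $Q_n$-factors separate from the channel factors, and it is precisely the choice $s=\tfrac{1}{1+\rho}$ that simultaneously factorizes the channel part into the single-letter $E_0$ and forces the identity $1-s\rho=s$ that turns the prior part into a clean power of $\sum_w Q_n(w)^{1/(1+\rho)}$, i.e. into the Renyi entropy of order $\tfrac{1}{1+\rho}$. The remaining points requiring care are checking that Jensen's inequality is applied in the correct direction---concavity forces $\rho\le 1$, which is exactly why the maximization in \eqref{eq:gallagernonexp} is restricted to $[0,1]$---and that the product input distribution really produces the single-letter $E_0$ after the per-coordinate factorization.
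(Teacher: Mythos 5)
Your proposal is correct and follows exactly the argument the paper relies on: the theorem is stated with a citation to Gallager, and the paper's own proofs of Theorems \ref{th:sys}--\ref{th:pasg} reuse precisely this machinery --- i.i.d.\ random codewords drawn from $P_X^n$, the MAP error event bounded by $\bigl[\sum_{w'\neq w}L(w',w)^{\eta}\bigr]^{\rho}$, Markov/Jensen with $0\le\rho\le1$, and the choice $\eta=\tfrac{1}{1+\rho}$ that single-letterizes the channel term into $E_0$ and turns the prior term into $\bigl(\sum_w Q_n(w)^{1/(1+\rho)}\bigr)^{1+\rho}=2^{\rho\rent{1/(1+\rho)}{Q_n}}$. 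Your bookkeeping of the identity $1-\eta\rho=\eta$ and the resulting Renyi-entropy factor is exactly right, so nothing further is needed.
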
 
In order to determine the achievable rates we calculate the maximum $R$ for which $E_G$ is positive. For this purpose the interesting region to study is around $\rho=0$. We compute  
\begin{align}
\left. \rho \rent{\frac{1}{1+\rho}}{Q_n} \right|_{\rho=0} =0,& \qquad   E_0(0,P_X) =0 \\ 
\left. \frac{d}{d\rho}\rho \rent{\frac{1}{1+\rho}}{Q_n} \right|_{\rho=0} &= \ent{Q_n} =nR_n\\
\left. \frac{dE_0}{d\rho} \right|_{\rho=0} &= \mi{X}{Y}
\end{align}
where  $\mi{\cdot}{\cdot}$ denotes mutual information. We conclude that, as long as $ R_n < \mi{X}{Y}$, $E_G$ is positive and hence $P_{e,n}$ decays to $0$ exponentially fast in $n$.  Hence $\mi{X}{Y}$ is an achievable rate for a fixed $P_X$. Optimizing over $P_X$ we have  
\begin{align}\label{eq:capnu}
C = \max_{P_X} \mi{X}{Y} 
\end{align}
which is the capacity of a DMC.  In the following, we will follow a similar approach to derive error exponents by upper bounding $P_{e,n}$ and achievable rates by optimizing error exponents.
\begin{comment}
but for the setup to mimic a PAS communication system we will introduce additional restrictions on $f_n$ and on the allowed decoders $g_n$.
\end{comment} 
\section{PAS as Joint Source-Channel Coding}\label{sec:pas}
\begin{figure}
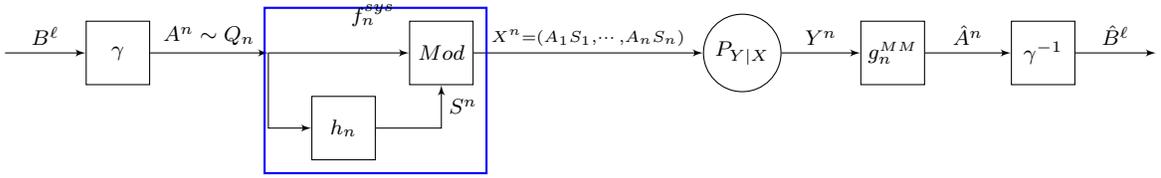

	\centering
	\footnotesize
	\includestandalone{figures/PASsetup/setup}
	\caption{Communication using PAS}
	\label{fig:passetup}
\end{figure}
A maximizer in \eqref{eq:capnu} is a capacity achieving distribution and is denoted by $P_{X^*}$. For unique $P_{X^*}$, a sequence of codes can achieve capacity only if the $k^{th}$ empirical distribution of the codebook (under certain regularity conditions) is \enquote{close} to the product distribution $P_{X^*}^k$ as long as the $k$ is not too large and the blocklength $n$ becomes sufficiently large \cite{shamaiempirical}. For many channels of practical interest such as the AWGN channel with average input power constraint, $P_{X^*}$ is non-uniform. To approach the capacity for such channels, one needs to shape the signal so that the channel input mimics $P_{X^*}$. Various methods have been proposed for signal shaping, including many-to-one mapping \cite{gallager1968information} and trellis shaping \cite{forney1992trellis}.

PAS is a coded modulation scheme which decouples the task of signal shaping from FEC \cite{georgPAS}. This decoupling allows for an efficient implementation of the scheme using off-the-shelf systematic FEC codes. In addition to closing the shaping gap, PAS provides a flexible rate adaption mechanism to adapt to the changing channel conditions. 

Communication using PAS is shown in Fig.~\ref{fig:passetup}. The focus in PAS is on channels that can be represented as $\mc{X} = \mc{A}\times\mc{S}$ for some finite sets $\mc{A}$ and $\mc{S}$. 
\begin{remark}
	In the context of PAS (and for coded-modulation in general) $\mc{X} = \mc{A}\times\mc{S}$ denotes set partitioning, i.e., $\mc{X}$ is partitioned into $|\mc{S}|$ sets, each of size $|\mc{A}|$.  We define two RVs $A$ and $S$ such that $S$ determines the partition which $X$ belongs to and $A$ represents the value that $X$ takes inside this partition. This is represented as $X=AS$. For example if $\mc{X}$ represents ASK modulation, then one possible partition is where $\mc{S}$ determines the sign and $\mc{A}$ determines the amplitude.
\end{remark}

PAS works with channel input distributions of the form $P_X = P_AP_S$ for for any probability distributions $P_A$ and $P_S$ over $\mc{A}$ and $\mc{S}$ . In practice normally $P_S$ is usually considered to be uniform because linear FEC codes generate parity bits by modulo-$2$ sum operations of multiple source bits which leads to more uniform marginal distributions of the parity bits but in our work we do not impose such restrictions. For many practical channels the restriction to product distribution does not incur any penalty. One such example is an AWGN channel with $2^m$-ASK input constellation. The capacity achieving distribution in this case is symmetric around the origin and hence of the form $P_AP_S$ for a uniform $P_S$ over $\mc{S}=\{0,1\}$. By using a Boltzmann distribution (which is of the form $P_AP_S$) one can effectively close the shaping gap for a wide range of SNR values \cite{hubershaping, georgPAS}. In \cite{georgPAS,tobiasshaping} and  \cite{gianluigishaping} the authors discuss the performance gains of using such a shaped distribution to communicate over different channels with ASK and QAM modulations.

In Fig.~\ref{fig:passetup}, $\gamma$ and $\gamma^{-1}$ represent the distribution matcher and dematcher respectively. The aim of $\gamma$ is to invertibly transform the source message $B^\ell$ to look as if it was generated by a Discrete Memoryless Source (DMS) $P_A$. The fundamental limits of distribution matching have been discussed in \cite{bocherer2014informational}. In \cite{schulte2016constant} the authors proposed CCDM, a practical distribution matcher that produces output sequences from a chosen type set. For the purpose of theoretical analysis, we will abstract the concept of a distribution matcher by assuming that the input $A^n$ to $f_n^{sys}$ is distributed according to $Q_n$ and we adapt $Q_n$ to fit to what one expects from the output statistics of a distribution matcher. In subsequent sections we will hence assume that the source alphabet $\mc{W}_n$ is $\mc{A}^n$ in the context of JSCC. Since distribution matching is invertible, using $\ent{Q_n}$ in the rate expression is the right metric because this is the entropy of the source message at the input of the distribution matcher. 

$f_n^{sys}$ transforms $A^n =A_1\cdots A_n$ to the channel input $X^n= (A_1S_1,\cdots, A_nS_n)$ where $S^n$ are the parity symbols generated by the FEC code $h_n$. For convenience we abuse the notation to define $(A^n, S^n) = (A_1S_1,\cdots, A_nS_n)$, hence $X^n = (A^n, S^n)$. $f_n^{sys}$ can be thought of as a systematic channel code since $A^n$ is passed from the input to the output unchanged. The focus on such systematic codes to generate redundancy in PAS is to feed the shaped output $A^n$ of the distribution matcher to the channel unaltered in order to close the shaping gap.

Although the statistics of $A^n$ do not exactly match the statistics of a DMS in general, at the decoder  $g_n^{\scriptscriptstyle MM}$ of a PAS system it is normally assumed that $A^n$ is generated by a DMS $P_A$. This leads to a significant reduction in the decoding complexity. To analyze this we use the framework of mismatched decoding where we restrict ourselves to using a mismatched MAP decoder $g_n^{\scriptscriptstyle MMAP}$ instead of the true MAP decoder $g_n^{\scriptscriptstyle \text{MAP}}$.

Although the problem of coded modulation has been traditionally dealt in the framework of channel coding, the view of various aspects of PAS taken in this section, namely the use of \emph{systematic} FEC codes and the abstraction of the distribution matching process as a non uniform source over $\mc{A}^n$ makes JSCC the suitable framework to analyze PAS. Fig.~\ref{fig:pastheory} presents the JSCC setup that takes into account various aspects of PAS communication that are important for analysis. Our goal in this work is to analyze this setup to understand the theoretical capabilities of PAS.
 \begin{figure}
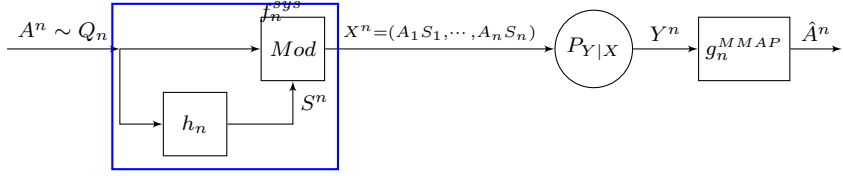

	\centering
	\footnotesize
     \includestandalone{figures/PAStheory/setup}
	\caption{JSCC setup for PAS}
	\label{fig:pastheory}
\end{figure}
\section{Systematic Encoding: Coding Theorem and Error Exponent}\label{sec:sys}
Consider the setup in Fig.~\ref{fig:syssetup} where $Q_n = P_A^n$, $\mc{X} = \mc{A}\times\mc{S}$ and we will use the MAP decoder for the analysis:
\begin{align}
	\hat{a}^n = g_n^{\scriptscriptstyle \text{MAP}}(y^n) &= \argmax\limits_{a^n \in \mc{A}^n} \pnd{Y|X}(y^n|f_n^{sys}(a^n)) P_A^n(a^n)
\end{align}
\begin{theorem}\label{th:sys}
	For  $Q_n=P_A^n$ and any $P_S$, there exists a systematic encoder $f_n^{sys}$ which when used with $g_n^{\scriptscriptstyle \text{MAP}}$ has $P_{e,n}$ upper bounded as   
	\begin{align}
		P_{e,n} \leq 2^{-nE_S}
	\end{align}
	where  
	\begin{align}
	 E_S = \max_{0 \leq \rho \leq 1} \log\left[-\sum\limits_{y} \left\{\sum\limits_{a,s} P_S(s) \left\{P_A(a)P(y|(a,s)) \right\}^{\frac{1}{1+\rho}} \right\}^{1+\rho}\right]
	\end{align}
\end{theorem}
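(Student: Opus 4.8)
The plan is to adapt Gallager's random-coding argument (the one behind Theorem~\ref{th:gallagernu}) to the systematic setting, the only novelty being that the systematic part of each codeword is pinned to the message while the parity is randomized. Concretely, I would fix $\rho\in[0,1]$ and consider the ensemble of systematic encoders in which, for every source word $a^n\in\mc{A}^n$, the encoder outputs $(a^n,S^n(a^n))$ with the parity string $S^n(a^n)$ drawn i.i.d.\ from $P_S^n$ and independently across distinct source words. This is an admissible systematic ensemble: the part $a^n$ is always passed through unchanged, as $f_n^{sys}$ requires, and all randomness lives in the parities.

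Next I would write the ensemble-averaged error probability of the MAP decoder $g_n^{\scriptscriptstyle \text{MAP}}$ and bound the pairwise-error event by Gallager's tilting inequality: for transmitted $(a^n,s^n)$ and received $y^n$, the probability that some competitor $\tilde a^n\neq a^n$ attains a larger MAP metric is at most $\big(\sum_{\tilde a^n\neq a^n}[\,P_A^n(\tilde a^n)\pnd{Y|X}(y^n\mid(\tilde a^n,\tilde s^n))/(P_A^n(a^n)\pnd{Y|X}(y^n\mid(a^n,s^n)))\,]^{1/(1+\rho)}\big)^{\rho}$. Substituting this, peeling off the correct-codeword factor, and using $\rho\le1$ together with Jensen's inequality on the $\rho$-th power reduces everything to the standard JSCC form $\bar{P}_{e,n}\le\sum_{y^n}\big(\sum_{a^n}Q_n(a^n)^{1/(1+\rho)}\,\expop[\pnd{Y|X}(y^n\mid x^n(a^n))^{1/(1+\rho)}]\big)^{1+\rho}$, the expectation being over the random code. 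The step to handle with care is the factorization of this expectation: it hinges on codewords of distinct source words being independent, which holds here because their parities are drawn independently, while the deterministic systematic parts do not interfere.

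The decisive structural input appears in evaluating $\expop[\pnd{Y|X}(y^n\mid x^n(a^n))^{1/(1+\rho)}]$. Since $a^n$ is deterministic and only $S^n\sim P_S^n$ is averaged, this equals $\prod_i\sum_s P_S(s)\,P_{Y|X}(y_i\mid(a_i,s))^{1/(1+\rho)}$, in contrast to the non-systematic case where the full input is averaged. Combining this with $Q_n=P_A^n$ single-letterizes the bound: the outer sum over $a^n$ and the product over coordinates separate, $P_A(a)^{1/(1+\rho)}$ merges with the channel term into $\{P_A(a)P(y\mid(a,s))\}^{1/(1+\rho)}$, and the weight $P_S(s)$ survives from the parity average. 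One is left with the single-letterized bound $\bar{P}_{e,n}\le 2^{-nE_S(\rho)}$, where $E_S(\rho)=-\log\sum_y\big(\sum_{a,s}P_S(s)\{P_A(a)P(y\mid(a,s))\}^{1/(1+\rho)}\big)^{1+\rho}$ is exactly the single-letter exponent appearing in the statement.

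Finally, since the ensemble average does not depend on $\rho$, the bound holds simultaneously for every $\rho\in[0,1]$, hence with the minimizing $\rho$, i.e.\ with $E_S=\max_{0\le\rho\le1}E_S(\rho)$. A standard averaging argument then yields at least one deterministic systematic encoder $f_n^{sys}$ whose error probability under $g_n^{\scriptscriptstyle \text{MAP}}$ is no larger than the ensemble average, which completes the proof. I expect the only genuine obstacle to be the bookkeeping around the systematic constraint, namely verifying that the independent-parity ensemble is admissible and that the asymmetric treatment of the deterministic systematic symbols versus the averaged parity symbols is carried consistently through the factorization; everything else mirrors the calculation underlying Theorem~\ref{th:gallagernu}.
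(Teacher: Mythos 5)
Your proposal is correct and follows essentially the same route as the paper: the same random systematic ensemble with parities drawn from $P_S^n$ independently across source words, Gallager's union--Bhattacharyya bound with parameter $\eta=\tfrac{1}{1+\rho}$, Markov/Jensen for $0\le\rho\le1$, single-letterization via the product structure of $Q_n=P_A^n$, $P_S^n$ and the DMC, and the standard existence argument from the ensemble average. Your explicit remark that only the parity coordinates are averaged in $\expop\bigl[\pnd{Y|X}(y^n\mid x^n(a^n))^{1/(1+\rho)}\bigr]$, and that independence across distinct source words is what licenses the factorization, is exactly the point the paper's steps (a)--(f) rely on.
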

\begin{proof}
	$\\$ \textbf{Code Construction:}
	Denote by $\mc{H}_n$ the set of all mappings from $\mc{A}^n$ to $\mc{S}^n$.  We will prove the theorem for a random ensemble of such mappings denoted by $H_n$ such that
	\begin{align}
		\Pb{H_n(a^n) = s^n} = P_S^n(s^n)    \qquad \forall a^n \in  \mc{A}^n, s^n \in \mc{S}^n
	\end{align} 
	i.e., the probability of choosing such a mapping $H_n$ which maps $a^n$ to $s^n$ for any $a^n$ and $s^n$ is $P_S^n(s^n)$. Hence the parity bits for any input $a^n$ are generated randomly according to the distribution $P_S^n$ in the ensemble. Based on $H_n$ we define the random ensemble of systematic encoders $F_n^{sys}$ which outputs $X^n = (a^n, H_n(a^n))$ as the codeword for $a^n$.  
	\\  \textbf{Encoder:} For any $a^n \in \mc{A}^n$, the codeword $x^n$ is defined as follows
	\begin{align}
		x_i = (a_i, s_i)
	\end{align}
	where 
	\begin{align}
		s^n = h_n(a^n)
	\end{align}
	$h_n$ represents a specific instance of the random variable $H_n$. \\
	\textbf{Decoder:}
	We will use MAP decoder. For a given $f_n^{sys}$ we have
	\begin{align}
		\hat{a}^n = g_n^{\scriptscriptstyle \text{MAP}}(y^n) &= \argmax\limits_{a^n \in \mc{A}^n} \pnd{Y|X}(y^n|f_n^{sys}(a^n)) P_A^n(a^n)
	\end{align} 
	\textbf{Analysis:}
	Define
	\begin{align}
		L(\tilde{a^n},a^n) = \frac{P_{Y^n|X^n} (Y^n|F_n^{sys}(\tilde{a}^n)) P_A^n(\tilde{a}^n)}{P_{Y^n|X^n} (Y^n|F_n^{sys}(a^n)) P_A^n(a^n)} \label{eq:likelihood}
	\end{align}
	Given $a^n$ is the message and $Y^n$ is received, the decoder can make an error if, for some $\tilde{a}^n \neq a^n$, we have
	\begin{align}
		L(\tilde{a^n},a^n) \geq 1
	\end{align}
	$P_{e,n}$, when averaged over the random ensemble of codes and the input, can be upperbounded as follows
	\begin{align}
		P_{e,n} &= \sum\limits_{a^n\in \mc{A}^n} P_A^n(a^n)\Pb{g_n^{\scriptscriptstyle \text{MAP}}(Y^n) \neq a^n|A^n=a^n}  \\ 
		&=\sum\limits_{a^n}  P_A^n(a^n) \sum\limits_{x^n \in \mc{X}^n} \Pb{F_n^{sys}(a^n) = x^n}  \sum\limits_{y^n}  P(y^n|x^n) \Pb{g_n^{\scriptscriptstyle \text{MAP}}(y^n) \neq a^n|A^n=a^n, F_n^{sys}(a^n) = x^n} \\
		&\overset{(a)}{=}\sum\limits_{a^n}  P_A^n(a^n) \sum\limits_{s^n \in \mc{S}^n} P_S^n(s^n)  \sum\limits_{y^n}  P(y^n|(a^n,s^n)) \Pb{g_n^{\scriptscriptstyle \text{MAP}}(y^n) \neq a^n|A^n=a^n, F_n^{sys}(a^n) = (a^n,s^n)} \\
		&\overset{(b)}{\leq} \sum\limits_{a^n}  P_A^n(a^n) \sum\limits_{s^n} P_S^n(s^n)  \sum\limits_{y^n}  P(y^n|(a^n,s^n))  \Pb{\left.\left\{\sum\limits_{\tilde{a}^n \neq a^n}L(\tilde{a}^n,a^n)^\eta\right\}^\rho \geq 1 \right| A^n=a^n,F_n^{sys}(a^n)=(a^n,s^n)}\\
		&\overset{(c)}{\leq} \sum\limits_{a^n}  P_A^n(a^n) \sum\limits_{s^n } P_S^n(s^n)  \sum\limits_{y^n}  P(y^n|x^n)  \exop\left[\left.\left\{\sum\limits_{\tilde{a}^n \neq a^n}L(\tilde{a}^n,a^n)^\eta\right\}^\rho \right| A^n=a^n,F_n^{sys}(a^n)=(a^n,s^n)\right] \\
		&\overset{(d)}{\leq} \sum\limits_{a^n}  P_A^n(a^n) \sum\limits_{s^n} P_S^n(s^n)  \sum\limits_{y^n}  P(y^n|(a^n,s^n))  \left[\expop\left\{\left.\sum\limits_{\tilde{a}^n \neq a^n}L(\tilde{a}^n,a^n)^\eta\right| A^n=a^n,F_n^{sys}(a^n)=(a^n,s^n)\right\}^\rho \right] \\
		&\overset{(e)}{=}\sum\limits_{y^n} \left\{\sum\limits_{a^n,s^n} P_S^n(s^n) \left\{P_A^n(a^n)P(y^n|(a^n,s^n)) \right\}^{\frac{1}{1+\rho}} \right\}^{1+\rho} \\
		&\overset{(f)}{=}\left[\sum\limits_{y} \left\{\sum\limits_{a,s} P_S(s) \left\{P_A(a)P(y|(a,s)) \right\}^{\frac{1}{1+\rho}} \right\}^{1+\rho}\right]^n \\
		&=2^{-n\left[-\log\sum\limits_{y} \left\{\sum\limits_{a,s} P_S(s) \left\{P_A(a)P(y|(a,s)) \right\}^{\frac{1}{1+\rho}} \right\}^{1+\rho}\right]} \label{eq:sysfin}
	\end{align}
	where
		\begin{figure}
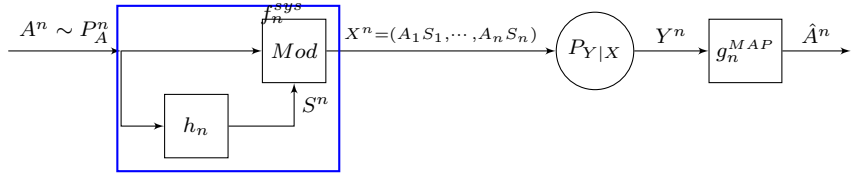

		\centering
		\footnotesize
		\includestandalone{figures/sys/setup}
		\caption{JSCC using systematic encoder}
		\label{fig:syssetup}
	\end{figure}
	\begin{itemize}
		\item (a) follows from the definition of the ensemble $H_n$.
		\item (b) is true for any $\rho \geq 0$ and $\eta \geq 0$.
		\item (c) uses Markov inequality
		\item (d) is a result of using Jensen inequality and restricting $ 0 \leq \rho \leq 1$	
		\item (e) uses $\eta =\frac{1}{1+\rho}$, adds $\tilde{a}^n = a^n$ term in the summation inside expectation  and does a rearrangement of the terms inside the expectation.
		\item (f) is due to the product distributions and the DMC.
	\end{itemize}
	Optimizing \eqref{eq:sysfin} over $\rho$ we get the final exponent $E_S$. Using the standard argument that at least one of the codes in the random ensemble should have $P_{e,n}$ at least as good as the average $P_{e,n}$ of the ensemble, we arrive at the conclusion that there exist at least one systematic encoder which when used with MAP decoder have its $P_{e,n}$ upperbounded as \eqref{eq:sysfin}.
\end{proof}
 Unlike \eqref{eq:gallagernonexp}, we do not have two separate terms in $E_S$, one related to $R_n$ whereas the other dependent on the channel and the channel input distribution. To evaluate the achievable rates we have 
 \begin{align}
 \left.\log\left[\sum\limits_{y} \left\{\sum\limits_{a,s} P_S(s) \left\{P_A(a)P(y|(a,s)) \right\}^{\frac{1}{1+\rho}} \right\}^{1+\rho}\right] \right|_{\rho =0} &= 0 \\
 \left.\frac{d}{d\rho} \left[-\log\sum\limits_{y} \left\{\sum\limits_{a,s} P_S(s) \left\{P_A(a)P(y|(a,s)) \right\}^{\frac{1}{1+\rho}} \right\}^{1+\rho}\right] \right|_{\rho =0} &=  \mi{AS}{Y} - \ent{A} 
 \end{align}
 Hence as long as $\ent{A} < \mi{AS}{Y}$, $E_S$ will be positive. We also know that $R_n = \ent{A}$, hence we conclude that any rate $R< \mi{X}{Y}$ is achievable for $P_X=P_AP_S$.
 
 Note that $\ent{A} < \mi{AS}{Y}$ implies $\ent{S} > \ent{AS|Y} = \ent{X|Y}$. $\ent{S}$ corresponds to the redundancy that we introduce for reliability whereas $ \ent{X|Y}$ is the uncertainty the channel introduces. 
\begin{corollary}
	For uniform $P_S$, we have 
	\begin{align}
       &\sum\limits_{y^n} \left\{\sum\limits_{a^n,s^n} P_S^n(s^n) \left\{P_A^n(a^n)P(y^n|(a^n,s^n)) \right\}^{\frac{1}{1+\rho}} \right\}^{1+\rho} \\
       &\overset{(a)}{=}2^{-n\rho\log|\mc{S}|}\sum\limits_{y^n} \left\{\sum\limits_{a^n, s^n} \left\{P_S^n(s^n) P_A^n(a^n)P(y^n|(a^n,s^n)) \right\}^{\frac{1}{1+\rho}} \right\}^{1+\rho} \\
       &\overset{(b)}{=} 2^{-n\rho\log|\mc{S}|}2^{\rho \rent{\frac{1}{1+\rho}}{X^n|Y^n}} \\	      
       &= 2^{-n\rho \left(\log|\mc{S}| - \rent{\frac{1}{1+\rho}}{X|Y}\right)}    
	\end{align}
	where 
\begin{itemize}
		\item (a) we used the fact that $P_S^n(\cdot) = 2^{-n\log|\mc{S}|}$.
		\item (b) uses the definition of conditional Renyi entropy.
\end{itemize}
     Hence in this case 
		\begin{align}
	       E_S = \max_{0 \leq \rho \leq 1} \left[\rho(\log|\mc{S}|-\rent{\frac{1}{1+\rho}}{X|Y})\right].
      	\end{align}  
      	where the error exponent separates into two terms, one related to the rate whereas the other related to channel and the channel input distribution. 
\end{corollary}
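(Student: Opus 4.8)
The plan is to obtain the corollary as a direct specialization of Theorem~\ref{th:sys}: no new probabilistic or random-coding argument is needed, since the bound $P_{e,n}\le 2^{-nE_S}$ is already in hand. All that remains is to reduce the $n$-block quantity
\[
\sum_{y^n}\left\{\sum_{a^n,s^n}P_S^n(s^n)\left\{P_A^n(a^n)P(y^n|(a^n,s^n))\right\}^{\frac{1}{1+\rho}}\right\}^{1+\rho}
\]
that appears in step~(e) of the proof of Theorem~\ref{th:sys} to the claimed closed form when $P_S$ is uniform, and then to read off $E_S$ by optimizing over $\rho$.

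First I would exploit uniformity to move the weight $P_S^n(s^n)$ into the Gallager bracket. Since $P_S$ is uniform, $P_S^n(s^n)=|\mc{S}|^{-n}=2^{-n\log|\mc{S}|}$ is constant in $s^n$. Using $\frac{\rho}{1+\rho}+\frac{1}{1+\rho}=1$, I would factor $P_S^n(s^n)=2^{-n\frac{\rho}{1+\rho}\log|\mc{S}|}\,\{P_S^n(s^n)\}^{\frac{1}{1+\rho}}$, pull the constant $2^{-n\frac{\rho}{1+\rho}\log|\mc{S}|}$ out of the inner sum, and absorb the surviving $\{P_S^n(s^n)\}^{\frac{1}{1+\rho}}$ next to $P_A^n$ and $P$ inside the exponent $\frac{1}{1+\rho}$. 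Raising the inner sum to the power $1+\rho$ turns the extracted constant into $2^{-n\rho\log|\mc{S}|}$, which is exactly step~(a). This is where uniformity is essential: for non-uniform $P_S$ the factor $P_S^n(s^n)$ cannot be pulled out of the $s^n$-sum.

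Next I would identify the residual sum with a conditional R\'enyi entropy. Writing $x^n=(a^n,s^n)$ and $P_{X^nY^n}(x^n,y^n)=P_X^n(x^n)P(y^n|x^n)$ with $P_X=P_AP_S$, the remaining expression $\sum_{y^n}\{\sum_{x^n}P_{X^nY^n}(x^n,y^n)^{\frac{1}{1+\rho}}\}^{1+\rho}$ is precisely the Arimoto form of the conditional R\'enyi entropy of order $\frac{1}{1+\rho}$: fixing $\rent{\alpha}{X|Y}=\frac{\alpha}{1-\alpha}\log\sum_y\left(\sum_x P_{XY}(x,y)^{\alpha}\right)^{1/\alpha}$ and substituting $\alpha=\frac{1}{1+\rho}$ (so that $\frac{\alpha}{1-\alpha}=\frac{1}{\rho}$ and $\frac{1}{\alpha}=1+\rho$) renders this sum as $2^{\rho\,\rent{\frac{1}{1+\rho}}{X^n|Y^n}}$, which is step~(b). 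The i.i.d./DMC product structure then gives the additivity $\rent{\frac{1}{1+\rho}}{X^n|Y^n}=n\,\rent{\frac{1}{1+\rho}}{X|Y}$, collapsing the whole expression to $2^{-n\rho\left(\log|\mc{S}|-\rent{\frac{1}{1+\rho}}{X|Y}\right)}$; optimizing over $0\le\rho\le1$ yields the stated $E_S$.

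I expect the only delicate point to be the bookkeeping in step~(b): conditional R\'enyi entropy has several inequivalent definitions differing by multiplicative factors, so I would pin down the Arimoto convention above and verify that the exponents $\frac{1}{\rho}$ and $1+\rho$ reproduce the Gallager sum before asserting the identity with $2^{\rho\,\rent{\frac{1}{1+\rho}}{X^n|Y^n}}$. The additivity and the final maximization are routine, and the resulting separation into a rate term $\rho\log|\mc{S}|$ and a channel term $-\rho\,\rent{\frac{1}{1+\rho}}{X|Y}$ is consistent with the earlier first-derivative computation: at $\rho=0$ the bracket vanishes and its slope equals $\log|\mc{S}|-\ent{X|Y}$, recovering the threshold $\ent{A}<\mi{AS}{Y}$, i.e.\ $R<\mi{X}{Y}$, already obtained for general $P_S$.
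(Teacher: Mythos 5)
Your proposal is correct and follows essentially the same route as the paper: step (a) by splitting the constant $P_S^n(s^n)=|\mc{S}|^{-n}$ via $\frac{\rho}{1+\rho}+\frac{1}{1+\rho}=1$ and pulling out $2^{-n\rho\log|\mc{S}|}$ after raising to the power $1+\rho$, step (b) by recognizing the Gallager sum as Arimoto's conditional R\'enyi entropy of order $\frac{1}{1+\rho}$, and then additivity over the product distribution followed by the maximization over $\rho$. Your explicit pinning down of the Arimoto convention and the check that $\frac{\alpha}{1-\alpha}=\frac{1}{\rho}$, $\frac{1}{\alpha}=1+\rho$ is a useful clarification of what the paper leaves implicit in item (b), but it is not a different argument.
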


\begin{corollary}\label{rem:affine}
	For $|\mc{X}|=2^m$ we can rename $\mc{X} =\{0,1\}^m$ and define $\mc{S}=\{0,1\}^p$ and $\mc{A}=\{0,1\}^{m-p}$. We can then restrict $\mc{H}_n$ to be the set of all affine mappings from $\mc{A}^n$ to $\mc{S}^n$ and define a uniform random ensemble $H_n$ of such mappings as done by Gallager \cite[Sec 6.2]{gallager1968information}.  This ensemble then can be used to prove Th.~\ref{th:sys}. Hence for such channels any rate $R < \mi{X}{Y}$, where $P_X=P_AP_S$ for uniform $P_S$, is achievable using systematic linear codes. 
\end{corollary}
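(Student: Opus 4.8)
The plan is to show that the affine ensemble reproduces exactly the pairwise statistics of the fully random ensemble $H_n$ used in the proof of Theorem~\ref{th:sys}, so that the chain of inequalities (a)--(f) there carries over verbatim. Identifying $\mc{A}^n$ with $\{0,1\}^{n(m-p)}$ and $\mc{S}^n$ with $\{0,1\}^{np}$, I would define the ensemble by $H_n(a^n)=\matg\,a^n\oplus\vecc$, where the $np\times n(m-p)$ binary matrix $\matg$ and the offset $\vecc\in\{0,1\}^{np}$ are drawn independently and uniformly at random. Each realization yields a systematic affine (coset) code $a^n\mapsto(a^n,\matg\,a^n\oplus\vecc)$, and the decoder stays the MAP decoder of Theorem~\ref{th:sys}.

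The two properties I need are (i) the correct marginal $\Pb{H_n(a^n)=s^n}=2^{-np}=P_S^n(s^n)$ for uniform $P_S$, and (ii) pairwise independence, i.e.\ that $(H_n(a^n),H_n(\tilde a^n))$ is uniform over $\mc{S}^n\times\mc{S}^n$ for every $\tilde a^n\neq a^n$. Property (i) is immediate from the uniform dither $\vecc$. For (ii) I would observe that the dither makes $H_n(a^n)=\matg\,a^n\oplus\vecc$ uniform and independent of $\matg$, whereas $H_n(\tilde a^n)\oplus H_n(a^n)=\matg(\tilde a^n\oplus a^n)$ is a function of $\matg$ alone; since $\tilde a^n\oplus a^n\neq\veczero$ and $\matg$ is uniform, $\matg(\tilde a^n\oplus a^n)$ is uniform on $\{0,1\}^{np}$, so the pair is jointly uniform. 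In particular the conditional law satisfies $\Pb{H_n(\tilde a^n)=\tilde s^n\mid H_n(a^n)=s^n}=P_S^n(\tilde s^n)$ for $\tilde a^n\neq a^n$, exactly as for the independent ensemble.

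With these in hand I would re-examine the proof of Theorem~\ref{th:sys} and verify that it never uses more than pairwise information about the ensemble. Step (a) invokes only the marginal of $H_n(a^n)$, supplied by (i). The ensemble enters again only through the expectation over $F_n^{sys}$ in (c)--(e): expanding $\expop[L(\tilde a^n,a^n)^\eta\mid H_n(a^n)=s^n]$ with the likelihood ratio of \eqref{eq:likelihood} gives $\sum_{\tilde s^n}\Pb{H_n(\tilde a^n)=\tilde s^n\mid H_n(a^n)=s^n}\,(\cdots)^\eta$, which depends on the ensemble solely through the conditional law of $H_n(\tilde a^n)$ given $H_n(a^n)$. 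By (ii) this conditional equals $P_S^n(\tilde s^n)$, so the computation reproduces \eqref{eq:sysfin} and hence the same exponent $E_S$. The standard argument that at least one code matches the ensemble average then yields a single systematic affine code meeting the bound, so every $R<\mi{X}{Y}$ with $P_X=P_AP_S$, uniform $P_S$, is achievable.

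The hard part is the algebraic claim (ii), pairwise independence, which the independent-parity ensemble enjoyed trivially but which the affine ensemble must be shown to retain; this is exactly the property the original proof exploits in passing from (d) to (e). Once Jensen's inequality in (d) and linearity of expectation have reduced the bound to a sum of per-pair expectations $\expop[L(\tilde a^n,a^n)^\eta\mid H_n(a^n)=s^n]$, no triple-wise dependence of the affine ensemble is ever invoked, so pairwise independence is precisely enough. I would stress that this is tied to uniform $P_S$: affine maps over $\mathbb{F}_2$ force uniform parity marginals and cannot realize a general $P_S^n$, which is why the corollary is confined to uniform $P_S$.
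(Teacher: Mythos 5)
Your proposal is correct and follows essentially the same route the paper intends: the corollary's justification is precisely Gallager's affine (coset) ensemble $H_n(a^n)=\matg a^n\oplus\vecc$, and your verification that the proof of Th.~\ref{th:sys} uses the ensemble only through the marginal of $H_n(a^n)$ and the pairwise conditional law $\Pb{H_n(\tilde a^n)=\tilde s^n\mid H_n(a^n)=s^n}=P_S^n(\tilde s^n)$ is exactly the missing detail the paper delegates to the citation. Your closing observation that affine maps over $\mathbb{F}_2$ force uniform parity marginals, which is why the statement is restricted to uniform $P_S$, is also the right reading of the corollary's hypothesis.
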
	
\begin{remark}\label{rem:opt}
We have considered the source distribution $P_A$ to be given. In practice we can design the distirbution matcher to (ideally) mimic a DMS our choice, hence $P_A$ then becomes a design choice leading to the following maximum achievable rate expression
\begin{align}
R^* =  \max_{P_A, P_S} \mi{X}{Y}  
\end{align}
where $P_X=P_AP_S$
\end{remark}
\section{Source Statistics Mismatch: Coding Theorem and Error Exponent}\label{sec:mis}
\begin{figure}
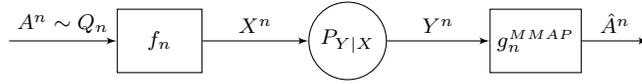

	\centering
	\footnotesize
    \includestandalone{figures/mismatched/mis}
	\caption{Joint-Source Channel Coding with Mismatched MAP decoder}
	\label{fig:mismatch}
\end{figure}
The setup for this section is shown in Fig.~\ref{fig:mismatch}. The only difference to the setup in Fig.~\ref{fig:classic} is that we now use a mismatched decoder, specifically the mismatch is between the actual source statistics $Q_n$ and $P_A^n$ assumed at the decoder. In this section we do not restrict the channel input alphabet to be of the form $\mc{X} = \mc{A}\times\mc{S}$ since we are not focusing on systematic encoders.
\begin{theorem}\label{th:sourcemis}
	Consider the DMC $P_{Y|X}$ with any finite input alphabet $\mc{X}$ and let $P_X$ be any distribution over $\mc{X}$. Let $P_{\bar{A}}$  be some $k$-type distribution and  $P_A$ be any distribution satisfying $P_A \gg P_{\bar{A}}$ $\left(\supp(P_{\bar{A}}) \subseteq \supp(P_A)\right)$, then, for every $n = kj$ where $j$ is a positive integer and  $\supp(Q_n) \subseteq \typ{P_{\bar{A}}}$, there exists an encoder $f_n:\mc{A}^n\to\mc{X}^n$ for this setup, such that when used with the mismatched MAP decoder  
	\begin{align}\label{eq:misdecth}
		\hat{a}^n = \argmax_{a^n \in \mc{A}^n} P_{Y|X}^n(y^n|f_n(a^n)) P_A^n(a^n)
	\end{align}
	has a  $P_{e,n}$ upper bounded by  
	\begin{align}
		P_{e,n} \leq 2^{-nE_M}
	\end{align}
	where  
	\begin{align}\label{eq:expsrc}
		E_M =\max_{0 \leq \rho \leq 1}  \left[E_0 - \frac{\rho}{1+\rho}\left(\kl{P_{\bar{A}}}{P_A}+\ent{P_{\bar{A}}}\right)\right. \\
		\left. - \frac{\rho^2}{1+\rho}\rent{\frac{1}{1+\rho}}{P_A} \right]. \nonumber
	\end{align}
\end{theorem}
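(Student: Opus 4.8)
The plan is to mirror Gallager's random-coding argument used in the proof of Theorem~\ref{th:sys}, with the two modifications dictated by the present setup: the decoder is now the mismatched one in \eqref{eq:misdecth}, and the transmitted messages are confined to the type set $\typ{P_{\bar{A}}}$. First I would construct a random ensemble $F_n$ in which every $a^n\in\mc{A}^n$ is \emph{independently} assigned a codeword $X^n(a^n)$ drawn according to the product distribution $P_X^n$; unlike the systematic construction of Theorem~\ref{th:sys}, here the codewords are unconstrained i.i.d. draws. The condition $n=kj$ guarantees that the $k$-type $P_{\bar{A}}$ is also an $n$-type, so $\typ{P_{\bar{A}}}$ is well defined and nonempty, and the hypothesis $\supp(Q_n)\subseteq\typ{P_{\bar{A}}}$ ensures every transmitted message lies in this type set.

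Next I would set up the error analysis exactly as in \eqref{eq:likelihood}, defining the likelihood ratio $L(\tilde{a}^n,a^n)$ but now with the \emph{assumed} source weight $P_A^n$ (rather than the true $Q_n$) appearing in both numerator and denominator, so as to match the decoder in \eqref{eq:misdecth}. Averaging over the message according to $Q_n$ and over the ensemble, then applying the union-type bound, Markov's inequality, and Jensen's inequality with $\eta=\tfrac{1}{1+\rho}$ and $0\le\rho\le1$ — precisely the chain of steps (a)--(f) in the proof of Theorem~\ref{th:sys} — I would reach a bound whose inner expectation factorizes into a channel part and a source-prior part.

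The crucial and only genuinely new step is the source-prior part $\sum_{\tilde{a}^n\in\mc{A}^n}\bigl(P_A^n(\tilde{a}^n)/P_A^n(a^n)\bigr)^{\eta}$, where the competing messages $\tilde{a}^n$ range over the \emph{entire} alphabet $\mc{A}^n$ because the decoder searches all of $\mc{A}^n$. Here I would invoke $\supp(Q_n)\subseteq\typ{P_{\bar{A}}}$ together with \eqref{eq:typprob}: for every transmitted $a^n$ the denominator equals the \emph{constant} $P_A^n(a^n)=2^{-n(\ent{P_{\bar{A}}}+\kl{P_{\bar{A}}}{P_A})}$, so it can be pulled outside the average and carries no dependence on $a^n$. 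The numerator, summed over all of $\mc{A}^n$, factorizes as $\bigl(\sum_{\tilde{a}}P_A(\tilde{a})^{1/(1+\rho)}\bigr)^{n}=2^{\,n\frac{\rho}{1+\rho}\rent{\frac{1}{1+\rho}}{P_A}}$ by the definition of the Rényi entropy. Multiplying these two contributions and raising the result to the power $\rho$ produces exactly the constant factor $2^{\,n[\frac{\rho}{1+\rho}(\ent{P_{\bar{A}}}+\kl{P_{\bar{A}}}{P_A})+\frac{\rho^2}{1+\rho}\rent{\frac{1}{1+\rho}}{P_A}]}$, while the channel part collapses to $2^{-nE_0(\rho,P_X)}$ as in step (f). Since the constant is independent of $a^n$, the outer sum $\sum_{a^n}Q_n(a^n)=1$ vanishes and the bound $2^{-nE_M}$ follows after optimizing over $\rho$.

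I expect the main obstacle to be the careful bookkeeping of the source-prior term rather than any deep difficulty. The two points that must be handled correctly are, first, recognizing that the support restriction is exactly what renders $P_A^n(a^n)$ constant across transmitted messages, so that the mismatch enters only as a deterministic exponential offset (through $\kl{P_{\bar{A}}}{P_A}$ and $\ent{P_{\bar{A}}}$) rather than as a data-dependent penalty; and second, observing that summing the competing weights over the full alphabet $\mc{A}^n$, not merely over $\typ{P_{\bar{A}}}$, is what produces the Rényi-entropy term $\rent{\frac{1}{1+\rho}}{P_A}$. Once this factorization is in place, I would conclude via the standard argument that at least one deterministic code in the ensemble achieves $P_{e,n}$ no worse than the ensemble average, which yields the existence of the desired encoder $f_n$.
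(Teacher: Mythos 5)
Your proposal follows essentially the same route as the paper's proof: the same i.i.d.\ random ensemble $F_n$ assigning $P_X^n$-distributed codewords to all of $\mc{A}^n$, the same Markov/Jensen chain with $\eta=\tfrac{1}{1+\rho}$, the same use of \eqref{eq:typprob} to turn the constant $P_A^n(a^n)$ on $\typ{P_{\bar{A}}}$ into the $\kl{P_{\bar{A}}}{P_A}+\ent{P_{\bar{A}}}$ offset, the same R\'enyi-entropy factorization of the competitor sum over all of $\mc{A}^n$, and the same existence argument. The proof is correct and matches the paper's in all essential respects.
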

\begin{proof} 
     $\\$ \textbf{Code Construction:} Denote by $\mc{F}_n$ the set of all mappings from $\mc{A}^n$ to $\mc{X}^n$.  We will prove the theorem for a random ensemble of such mappings denoted by $F_n$ such that
      \begin{align}
      \Pb{F_n(a^n) = x^n} = P_X^n(x^n)   \qquad \forall a^n, x^n
      \end{align}
      Note that the mappings in $\mc{F}_n$ are defined for all $a^n \in \mc{A}^n$, i.e., every $a^n$ is assigned a codeword, although the source message takes values only in the set $\typ{P_{\bar{A}}}$. This is necessary for the decoding rule in \eqref{eq:misdecth} to be well defined which searches over codewords for all sequences $a^n \in \mc{A}^n$. 
     \\ \textbf{Encoder:} For any $a^n \in \typ{P_{\bar{A}}}$
     \begin{align}
     x^n = f_n(a^n)
     \end{align}
     where $f_n$ is a specific instance of $F_n$.\\
     \textbf{Decoder:} As mentioned in the theorem, for a given $f_n$, we will use the following decoding rule.
     \begin{align}
     \hat{a}^n = g_n^{\scriptscriptstyle \text{MMAP}}(y^n) =  \argmax_{a^n \in \mc{A}^n} P_{Y|X}^n(y^n|f_n(a^n)) P_A^n(a^n) \label{eq:misdecrule}
     \end{align}
     \textbf{Analysis:}
     Define
     \begin{align}
     L(\tilde{a^n},a^n) = \frac{P_{Y^n|X^n} (Y^n|f_n(\tilde{a}^n)) P_A^n(\tilde{a}^n)}{P_{Y^n|X^n} (Y^n|f_n(a^n)) P_A^n(a^n)} \label{eq:likelihood}
     \end{align}
     Given $a^n$ is the message and $Y^n$ is received, the decoder can make an error if, for some $\tilde{a}^n \neq a^n$, we have
     \begin{align}
     L(\tilde{a^n},a^n) \geq 1
     \end{align}
     $P_{e,n}$, when averaged over the random ensemble of codes and the input, can be upperbounded as follows
     \begin{align}
    P_{e,n} &= \sum\limits_{a^n\in \typ{P_{\bar{A}}}} Q(a^n)\Pb{g_n^{\scriptscriptstyle \text{MMAP}}(Y^n) \neq a^n|A^n=a^n}  \\ 
    &=\sum\limits_{a^n\in \typ{P_{\bar{A}}}}  Q(a^n) \sum\limits_{x^n \in \mc{X}^n} \Pb{F_n(a^n) = x^n}  \sum\limits_{y^n}  P(y^n|x^n) \Pb{g_n^{\scriptscriptstyle \text{MMAP}}(y^n) \neq a^n|A^n=a^n, F_n(a^n) = x^n} \\
   &\overset{(a)}{=}\sum\limits_{a^n\in \typ{P_{\bar{A}}}}  Q(a^n) \sum\limits_{x^n} P_X^n(x^n)  \sum\limits_{y^n}  P(y^n|x^n) \Pb{g_n^{\scriptscriptstyle \text{MMAP}}(y^n) \neq a^n|A^n=a^n, F_n(a^n) = x^n} \\    
   &\overset{(b)}{\leq} \sum\limits_{a^n \in \typ{P_{\bar{A}}}}  Q(a^n) \sum\limits_{x^n} P_X^n(x^n)  \sum\limits_{y^n}  P(y^n|x^n) \left[\expop\left\{\left.\sum\limits_{\tilde{a}^n \neq a^n}L(\tilde{a}^n,a^n)^\eta\right| A^n=a^n,F_n(a^n)=x^n\right\}^\rho \right] \\
   &\overset{(c)}{\leq} \sum\limits_{a^n \in\typ{P_{\bar{A}}}} Q(a^n)P_A^n(a^n)^{-\frac{\rho}{1+\rho}}\left\{\sum\limits_{\tilde{a}^n \in \mc{A}^n}P_A^n(\tilde{a}^n)^{\frac{1}{1+\rho}}\right\}^\rho \sum\limits_{y^n}\left\{\sum\limits_{x^n} P(x^n) P(y^n|x^n)^{\frac{1}{1+\rho}}\right\}^{1+\rho} \label{eq:misfin}
        \end{align}
     where
     \begin{itemize}
     	\item (a) follows from the definition of the ensemble $F_n$.
     	\item (b) uses Markov inequality followed by Jensen equality and is true for any $0 \leq \rho \leq 1$ and $\eta \geq 0$.
     	\item (c) includes $\tilde{a}^n = a^n$ in the summation inside expectation, replaces $\eta =\frac{1}{1+\rho}$ and rearranges terms.
     \end{itemize}
 A careful look at the three terms in \eqref{eq:misfin} reveal that
   \begin{align}
   \sum\limits_{a^n \in \typ{P_{\bar{A}}}} Q(a^n)P_A^n(a^n)^{-\frac{\rho}{1+\rho}} \overset{(a)}{=} 2^{\frac{n\rho}{1+\rho}\left(\kl{P_{\bar{A}}}{P_A}+\ent{P_{\bar{A}}}\right)} \sum\limits_{a^n \in \typ{P_{\bar{A}}}} Q(a^n)   &=  2^{\frac{n\rho}{1+\rho}\left(\kl{P_{\bar{A}}}{P_A}+\ent{P_{\bar{A}}}\right)} \\
   \left\{\sum\limits_{\tilde{a}^n \in \mc{A}^n}P_A^n(\tilde{a}^n)^{\frac{1}{1+\rho}}\right\}^\rho &= 2^{\frac{n\rho^2}{1+\rho}H_{\frac{1}{1+\rho}}(P_A)} \\
      \sum\limits_{y^n}\left\{\sum\limits_{x^n} P(x^n) P(y^n|x^n)^{\frac{1}{1+\rho}}\right\}^{1+\rho} &= 2^{-nE_0}
   \end{align}
   where (a) follows from \eqref{eq:typprob}. Hence 
\begin{align}
P_{e,n} \leq 2^{-n\left( E_0 - \frac{\rho}{1+\rho}\left(\kl{P_{\bar{A}}}{P_A}+\ent{P_{\bar{A}}}\right) - \frac{\rho^2}{1+\rho}\rent{\frac{1}{1+\rho}}{P_A} \right)} \label{eq:misfin2}
\end{align}
Finally we can optimize over $\rho$ to obtain $2^{-nE_M}$. Using the standard argument that at least one of the codes in the random ensemble should have $P_{e,n}$ at least as good as the average $P_{e,n}$ of the ensemble, we arrive at the conclusion that there exist at least one such encoder which when used with the mismatched MAP decoder have its $P_{e,n}$ upperbounded as \eqref{eq:misfin2}.
\end{proof}
A few remarks about the theorem and the setup are in order. 
\begin{itemize}
	\item In the theorem, the support of $Q_n$ has been restricted to $\typ{P_{\bar{A}}}$ which can be justified by looking at the operation of practical matchers such as CCDM \cite{patrickccdm}.
	\item  Looking at \eqref{eq:misdecth}, the mismatch is because the decoder assumes that the input is generated by a DMS $P_A$ instead of the actual source distribution $Q_n$. Hence not only does the decoder use the wrong statistics, but it also searches over all sequences in $\mc{A}^n$ instead of only $\typ{P_{\bar{A}}}$ to look for the most probable input sequence. Note that this setup is different from what is studied under the name mismatched decoding in the literature (e.g., \cite{csiszarmismatch, scarlettmismatch}), where the mismatch between the actual channel statistics and the channel statistics assumed at the decoder are discussed. The motivation to analyze a mismatched decoder (in this section and in Sec.~\ref{sec:paserror}) that assumes $A^n$ (the output of the distribution matcher in a PAS system) to be generated by a DMS $P_A$ instead of the true distribution $Q_n$ comes from practical systems to reduce the decoding complexity. If the decoder would consider the true distribution then it will have to not only deal with the FEC constraints introduced in $S^n$ but also with the constraints introduced by the matcher on $A^n$, hence coupling the FEC decoding and dematching process which leads to increased computational complexity.
	\item We need $P_A \gg  P_{\bar{A}}$ because of our choice of the decoder ,i.e., the mismatched MAP decoder.  $P_{Y|X}^n(y^n|f_n(a^n)) P_A^n(a^n)$ will be $0$ for all $a^n \in \typ{P_{\bar{A}}}$ if this condition is not satisfied. 
\end{itemize}
To calculate the achievable rates, note that all 3 terms in the R.H.S of \eqref{eq:expsrc} are $0$ for $\rho =0$ and   
\begin{align}
\scriptstyle \left. \frac{d}{d\rho} \frac{\rho}{1+\rho}\left(\kl{P_{\bar{A}}}{P_A}+\ent{P_{\bar{A}}}\right)\right|_{\rho =0} &= \kl{P_{\bar{A}}}{P_A}+\ent{P_{\bar{A}}} \\
\left.\frac{d}{d\rho} \frac{\rho^2}{1+\rho}\rent{\frac{1}{1+\rho}}{P_A} \right|_{\rho =0} &= 0 \\
\left.\frac{d}{d\rho} E_0 \right|_{\rho =0} &= \mi{X}{Y}
\end{align}
Hence as long as 
\begin{align}
\ent{P_{\bar{A}}} < \mi{X}{Y} - \kl{P_{\bar{A}}}{P_A}
\end{align}
 $E_M$ will be positive.
Since we have $\supp(Q_n)\subseteq\typ{P_{\bar{A}}}$ hence $R_n \leq \frac{1}{n}\log\left|\typ{P_{\bar{A}}}\right|$. Combining this with \eqref{eq:typsize} we get  
\begin{align}
R_n \leq \ent{P_{\bar{A}}}.
\end{align}
By utilizing most of the type set  $\typ{P_{\bar{A}}}$, i.e., $\frac{|\supp(Q_n)|}{|\typ{P_{\bar{A}}}|} \to 1$ and having $Q_n$ \enquote{close} to uniform, the lower bound in \eqref{eq:typsize} leads to the result that any rate below $\mi{X}{Y} - \kl{P_{\bar{A}}}{P_A}$ can be achieved.
$\kl{P_{\bar{A}}}{P_A}$ is the penalty one pays for assuming the wrong input DMS $P_A$ at the decoder.
\begin{corollary}
For $P_A = P_{\bar{A}}$, any $R < \mi{X}{Y}$ is achievable. Note that even for $P_A = P_{\bar{A}}$, it is still a mismatched setup since the encoder input has some arbitrary distribution $Q$ focused on $\typ{P_{\bar{A}}}$ while the decoder assumes a DMS $P_{\bar{A}}$.
\end{corollary}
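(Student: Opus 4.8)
The plan is to obtain this corollary as a direct specialization of Theorem~\ref{th:sourcemis} combined with the achievability discussion that immediately precedes it. First I would substitute $P_A = P_{\bar{A}}$ into the exponent \eqref{eq:expsrc}. Because the relative entropy of a distribution with itself vanishes, $\kl{P_{\bar{A}}}{P_A}=\kl{P_{\bar{A}}}{P_{\bar{A}}}=0$, so the exponent collapses to
\begin{align}
E_M = \max_{0 \leq \rho \leq 1} \left[ E_0 - \frac{\rho}{1+\rho}\ent{P_{\bar{A}}} - \frac{\rho^2}{1+\rho}\rent{\frac{1}{1+\rho}}{P_{\bar{A}}}\right].
\end{align}
Accordingly, the positivity condition derived just above the corollary, $\ent{P_{\bar{A}}} < \mi{X}{Y} - \kl{P_{\bar{A}}}{P_A}$, reduces to the single requirement $\ent{P_{\bar{A}}} < \mi{X}{Y}$.

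Next I would turn a positive exponent into achievability of a prescribed rate $R < \mi{X}{Y}$. The idea is to choose the reference $k$-type $P_{\bar{A}}$ so that $R < \ent{P_{\bar{A}}} < \mi{X}{Y}$; such a $k$-type exists for $k$ large enough, since the entropies of $k$-types become dense as $k\to\infty$ and the interval $(R,\mi{X}{Y})$ is nonempty by hypothesis. With this $P_{\bar{A}}$ fixed, I would take $Q_n$ uniform over an arbitrary subset of $\typ{P_{\bar{A}}}$ of cardinality $\lceil 2^{nR}\rceil$. This subset exists for all large $n$: by the upper bound in \eqref{eq:typsize}, $|\typ{P_{\bar{A}}}|$ is of order $2^{n\ent{P_{\bar{A}}}}$, which exceeds $2^{nR}$ once $R<\ent{P_{\bar{A}}}$. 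By construction $\supp(Q_n)\subseteq\typ{P_{\bar{A}}}$ and $R_n = \tfrac{1}{n}\log\lceil 2^{nR}\rceil \to R$.

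With these choices the hypotheses of Theorem~\ref{th:sourcemis} hold verbatim: the condition $P_A \gg P_{\bar{A}}$ is trivial since $P_A=P_{\bar{A}}$, and $\supp(Q_n)\subseteq\typ{P_{\bar{A}}}$ holds by design. Hence $P_{e,n}\leq 2^{-nE_M}$ with $E_M>0$ by the first step, so $P_{e,n}\to 0$ while $R_n\to R$, establishing that $R$ is achievable. The accompanying remark in the statement requires no proof: the decoder \eqref{eq:misdecrule} treats the input as the DMS $P_{\bar{A}}$ and maximizes over all of $\mc{A}^n$, whereas the true source $Q_n$ is supported on $\typ{P_{\bar{A}}}$ and is not a product distribution, so the setup remains mismatched even when $P_A=P_{\bar{A}}$.

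I expect the only delicate point to be the rate-filling argument of the second step, namely placing $\ent{P_{\bar{A}}}$ strictly between $R$ and $\mi{X}{Y}$ so that simultaneously $E_M>0$ holds and a near-uniform $Q_n$ of rate approaching $R$ fits inside $\typ{P_{\bar{A}}}$. This is handled by the density of $k$-type entropies together with the two-sided cardinality bound \eqref{eq:typsize}; everything else is a one-line substitution into the already-established theorem.
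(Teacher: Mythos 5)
Your proposal is correct and follows essentially the same route as the paper: specialize Theorem~\ref{th:sourcemis} to $P_A=P_{\bar{A}}$ so that $\kl{P_{\bar{A}}}{P_A}=0$, and combine the resulting positivity condition $\ent{P_{\bar{A}}}<\mi{X}{Y}$ with the type-set cardinality bound \eqref{eq:typsize} to match the rate. The only immaterial difference is that you take $Q_n$ uniform on a subset of $\typ{P_{\bar{A}}}$ of size $\lceil 2^{nR}\rceil$ with $R<\ent{P_{\bar{A}}}<\mi{X}{Y}$, whereas the paper lets $Q_n$ fill essentially the whole type set so that $R_n\to\ent{P_{\bar{A}}}$; both versions work because the error bound of Theorem~\ref{th:sourcemis} holds for any $Q_n$ supported on $\typ{P_{\bar{A}}}$.
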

\begin{remark}
For the scenario in Corollary~\ref{rem:affine} we can restrict $\mc{F}_n$ to be the set of all affine mappings from $\mc{A}^n$ to $\mc{X}^n$ and define a uniform random ensemble $F_n$ of such mappings as done by Gallager \cite[Sec 6.2]{gallager1968information}.  This ensemble then can be used to prove Th.~\ref{th:sourcemis} for uniform $P_X$. Hence any rate $R < \mi{X}{Y} - \kl{P_{\bar{A}}}{P_A}$ for uniform $P_X$ is achievable using linear codes.
\end{remark}
\section{Probabilistic Amplitude Shaping: Achievable Rates and Error Exponent}\label{sec:paserror}
Having looked at systematic encoding and mismatched decoding individually in Sec.~\ref{sec:sys} and Sec.~\ref{sec:mis}, we now combine the two.
 \begin{definition}[Permuter]
	Let $P_Z$ be an $m$-type distribution. A type-$P_Z$ permuter is a function $\phi_{P_Z} : \mc{Z}^n \to \mc{Z}^n$ such that
	\begin{align}
	            \phi_{P_Z}(z^n) &=z^n    \qquad \forall z^n \in \mc{Z}^n \setminus \typ{P_Z} \\
	             \phi_{P_Z}(z^n) &= \phi_{P_Z}'(z^n) \qquad \forall z^n \in  \typ{P_Z}
	\end{align} 
	where $\phi_P'$ is a permutation function from $\typ{P_Z}$ to $\typ{P_Z}$.
\end{definition}
\begin{theorem}\label{th:pasg}
	Let $P_{\bar{A}}$  be some $k$-type distribution. For every $n = kj$ for positive integers $j$ and  $\supp(Q_n) \subseteq \typ{P_{\bar{A}}}$, there exists an encoder $f_n^{sys}(\phi_{P_{\bar{A}}}(\cdot))$, where $f_n^{sys} $ is a systematic encoder and $\phi_{P_{\bar{A}}}$ is a type-$P_{\bar{A}}$ permuter, such that when used with the following mismatched MAP decoder  
	\begin{align}
	\hat{a}^n = g_n^{\scriptscriptstyle \text{MMAP}}(y^n) =  \argmax_{a^n \in \mc{A}^n} P_{Y|X}^n\left(y^n|f_n^{sys}(\phi_{P_{\bar{A}}}(a^n))\right) P_{A}^n(a^n)
	\end{align}
	has a $P_{e,n}$  upper bounded as  
	\begin{align}
	P_{e,n} \leq 2^{-nE_{SM}}
	\end{align}
	where  
	\begin{align}
	E_{SM} = \max_{0 \leq \rho \leq 1} \left(-\log\sum\limits_{y} \left\{\sum\limits_{a,s} P_S(s) \left\{P_A(a)P(y|(a,s)) \right\}^{\frac{1}{1+\rho}} \right\}^{1+\rho} -\alpha(n) -\kl{P_{\bar{A}}}{P_A}\right)
	\end{align}
	where $P_X = P_AP_S$.
	\begin{align}
	\alpha(n) &= |\mc{A}|\frac{\log (n+1)}{n} \overset{n\to \infty}{\longrightarrow} 0
	\end{align}
\end{theorem}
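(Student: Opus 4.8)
The plan is to run a random-coding argument that fuses the parity ensemble of Theorem~\ref{th:sys} with the permuter randomisation needed to decouple the source law $Q_n$ from the codeword assignment. Concretely, I would draw the permutation $\phi'_{P_{\bar A}}$ uniformly among all permutations of $\typ{P_{\bar A}}$ (acting as the identity off the type set) and, independently, draw the parity map $H_n$ with $\Pb{H_n(c^n)=s^n}=P_S^n(s^n)$ for every $c^n$, exactly as in Theorem~\ref{th:sys}. The codeword for a message $a^n\in\typ{P_{\bar A}}$ is then $(b^n,H_n(b^n))$ with $b^n=\phi'_{P_{\bar A}}(a^n)\in\typ{P_{\bar A}}$, while for $a^n\notin\typ{P_{\bar A}}$ it is $(a^n,H_n(a^n))$. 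Since the mismatched decoder ranges over all of $\mc A^n$, the analysis must track both kinds of competitors.

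For the error bound I would start from the usual Gallager step: given that $a^n$ was sent and $Y^n$ received, an error requires $L(\tilde a^n,a^n)\ge 1$ for some $\tilde a^n\ne a^n$, so $\Pb{\text{err}\mid a^n}\le\expop[(\sum_{\tilde a^n\ne a^n}L(\tilde a^n,a^n)^{1/(1+\rho)})^\rho]$ for $0\le\rho\le1$. Conditioning on the transmitted amplitude $b^n$, parity $s^n$ and output $Y^n$, and using Jensen to pull the expectation over the remaining randomness inside $(\cdot)^\rho$, the task reduces to bounding the numerator term $T(y^n)=\expop[\sum_{\tilde a^n\ne a^n}P_A^n(\tilde a^n)^{1/(1+\rho)}P(y^n\mid f_n^{sys}(\phi'(\tilde a^n)))^{1/(1+\rho)}]$, the fixed denominator being absorbed later.

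The crux is evaluating $T(y^n)$ by splitting the competitors, and this is the step I expect to be the main obstacle, since it is where the permuter earns its keep. For $\tilde a^n\in\typ{P_{\bar A}}$ the image $\phi'(\tilde a^n)$ is, conditioned on $\phi'(a^n)=b^n$, uniform over $\typ{P_{\bar A}}\setminus\{b^n\}$ with an independent fresh parity; summing the $|\typ{P_{\bar A}}|-1$ such terms cancels the $1/(|\typ{P_{\bar A}}|-1)$ coming from the uniform average, and, crucially, $P_A^n(\tilde a^n)^{1/(1+\rho)}$ equals the \emph{constant} $2^{-\frac{n}{1+\rho}(\ent{P_{\bar A}}+\kl{P_{\bar A}}{P_A})}$ on the type set by \eqref{eq:typprob}. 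Hence this piece equals $\sum_{c^n\in\typ{P_{\bar A}},\,\tilde s^n}P_S^n(\tilde s^n)[P_A^n(c^n)P(y^n\mid(c^n,\tilde s^n))]^{1/(1+\rho)}$. For $\tilde a^n\notin\typ{P_{\bar A}}$ the permuter is the identity and the parity is again independent, producing exactly the complementary sum over $\mc A^n\setminus\typ{P_{\bar A}}$. Adding the two pieces reassembles the full systematic Gallager sum $W(y^n)=\sum_{a^n\in\mc A^n,\,s^n}P_S^n(s^n)[P_A^n(a^n)P(y^n\mid(a^n,s^n))]^{1/(1+\rho)}$, giving $T(y^n)\le W(y^n)$ with no Renyi-entropy overhead, in contrast to Theorem~\ref{th:sourcemis}; this hinges precisely on the cardinality cancellation together with the constancy of $P_A^n$ on $\typ{P_{\bar A}}$.

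To finish, I would average over the transmitted $(b^n,s^n,Y^n)$. Because $P_A^n(a^n)^{-\rho/(1+\rho)}=2^{\frac{n\rho}{1+\rho}(\ent{P_{\bar A}}+\kl{P_{\bar A}}{P_A})}$ is constant on $\typ{P_{\bar A}}$ and $\sum_{a^n}Q(a^n)=1$, the source law drops out entirely (only $\supp(Q_n)\subseteq\typ{P_{\bar A}}$ matters), while the uniform amplitude average contributes a factor $1/|\typ{P_{\bar A}}|$; collecting exponents leaves the ensemble average $P_{e,n}\le \frac{2^{n(\ent{P_{\bar A}}+\kl{P_{\bar A}}{P_A})}}{|\typ{P_{\bar A}}|}\sum_{y^n}W(y^n)^{1+\rho}$. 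Invoking the lower bound in \eqref{eq:typsize}, namely $2^{n\ent{P_{\bar A}}}/|\typ{P_{\bar A}}|\le(n+1)^{|\mc A|}$, turns the prefactor into $2^{n\kl{P_{\bar A}}{P_A}}(n+1)^{|\mc A|}$, and the DMC/product structure (step (f) of Theorem~\ref{th:sys}) identifies $\sum_{y^n}W(y^n)^{1+\rho}$ with $2^{-nE_S^{(\rho)}}$, where $E_S^{(\rho)}=-\log\sum_{y}\{\sum_{a,s}P_S(s)[P_A(a)P(y\mid(a,s))]^{1/(1+\rho)}\}^{1+\rho}$ is the inner ($\rho$-fixed) systematic exponent. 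Thus $P_{e,n}\le 2^{-n(E_S^{(\rho)}-\alpha(n)-\kl{P_{\bar A}}{P_A})}$ with $\alpha(n)=|\mc A|\log(n+1)/n$; optimising over $\rho$ and extracting a code at least as good as the ensemble average yields $E_{SM}$.
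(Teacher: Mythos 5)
Your proposal is correct and follows essentially the same route as the paper: the same ensemble (independent $P_S^n$-distributed parities plus a uniform random type-$P_{\bar A}$ permuter), the same Markov/Jensen steps, the same cancellation of $|\typ{P_{\bar{A}}}|$ against the uniform permutation average using the constancy of $P_A^n$ on $\typ{P_{\bar{A}}}$ via \eqref{eq:typprob}, and the same use of the lower bound in \eqref{eq:typsize} to produce $\alpha(n)$ and the $\kl{P_{\bar{A}}}{P_A}$ penalty. The only difference is expository: you spell out the split of competitors into those inside and outside $\typ{P_{\bar{A}}}$ and the reassembly into the full Gallager sum, which the paper compresses into its steps (d)--(f).
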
  
\begin{proof}
	$\\$\textbf{Code Construction:} Define $H_n$ and $F_n^{sys}$ as in Sec.~\ref{sec:sys}. Besides denote by $\gamma$ the set all possible type-$P_{\bar{A}}$ permuters. Let $\Phi$ be a uniform random variable over $\gamma$.
	
	\textbf{Encoder:}
	    For any $a^n \in \typ{P_{\bar{A}}}$
	    \begin{align}
	    x_i &= (\bar{a}_i, s_i)
	    \end{align}
	    where 
	    \begin{align}
	    s^n = h_n(a^n) & \qquad \bar{a}^n = \phi(a^n) 
	    \end{align}
	    $h_n$ represents any specific instance of the random variable $H_n$ and $\phi$ represents an instance of $\Phi$ \\ 
	\textbf{Decoder:} As mentioned in the theorem, for a given $f_n^{sys}$, we will analyze the following decoding rule. 
	\begin{align}
	\hat{a}^n = g_n^{\scriptscriptstyle \text{MMAP}}(y^n) =  \argmax_{a^n \in \mc{A}^n} P_{Y|X}^n\left(y^n|f_n^{sys}(\phi_{P_{\bar{A}}}(a^n))\right) P_A^n(a^n)
	\end{align}
	              The discussion about mismatched decoding in Sec.~\ref{sec:mis} is also applicable here.	    \\           
	\textbf{Analysis:}
	We know that 
	\begin{align}
	 P_{e,n} &= \sum\limits_{a^n\in \typ{P_{\bar{A}}}} Q(a^n)\Pb{g_n^{\scriptscriptstyle \text{MMAP}}(Y^n) \neq a^n|A^n=a^n}  \\ 
	\end{align}
	Define
	\begin{align}
	L(\tilde{a^n},a^n) = \frac{P_{Y|X}^n\left(y^n|f_n^{sys}(\phi_{P_{\bar{A}}}(\tilde{a}^n))\right) P_A^n(\tilde{a}^n)}{P_{Y|X}^n\left(y^n|f_n^{sys}(\phi_{P_{\bar{A}}}(a^n))\right) P_A^n(a^n)} \label{eq:likelihood}
	\end{align}
	Given $a^n \in \typ{P_{\bar{A}}}$ is the message and $Y^n$ is received, the decoder can make an error if, for some $\tilde{a}^n \neq a^n$, we have
	\begin{align}
	L(\tilde{a^n},a^n) \geq 1
	\end{align}
	When averaging over the random ensemble we get,
	\begin{align}
	&\Pb{g_n^{\scriptscriptstyle \text{MM}}(Y^n) \neq a^n|A^n=a^n}   \overset{(a)}{\leq}  \exop\left[\left.\left\{\sum\limits_{\substack{\tilde{a}^n \in \mc{A}^n \\ \tilde{a}^n \neq a^n}} L(\tilde{a}^n,a^n)^s\right\}^\rho\right|A^n=a^n\right] \\	
	 &=\sum\limits_{\bar{a}^n\in \typ{P_{\bar{A}}}}  \Pb{\bar{a}^n =\Phi(a^n)} \sum\limits_{x^n \in \mc{X}^n} \Pb{F_n^{sys}(\bar{a}^n) = x^n}  \sum\limits_{y^n}  P_{Y|X}^n(y^n|x^n)   \exop\left[\left.\left\{\sum\limits_{\substack{\tilde{a}^n \in \mc{A}^n \\ \tilde{a}^n \neq a^n}} L(\tilde{a}^n,a^n)^\eta\right\}^\rho\right|\substack{A^n=a^n\\ \Phi(a^n) = \bar{a}^n \\ F_n^{sys}(\bar{a}^n) = x^n}\right]\\
	 &\overset{(b)}{=}\sum\limits_{\bar{a}^n\in \typ{P_{\bar{A}}}}  \frac{1}{\left|\typ{P_{\bar{A}}}\right|} \sum\limits_{s^n \in \mc{S}^n} P_S^n(s^n)  \sum\limits_{y^n}  P_{Y|X}^n(y^n|(\bar{a}^n,s^n))   \exop\left[\left.\left\{\sum\limits_{\substack{\tilde{a}^n \in \mc{A}^n \\ \tilde{a}^n \neq a^n}} L(\tilde{a}^n,a^n)^\eta\right\}^\rho\right|\substack{A^n=a^n\\ \Phi(a^n) = \bar{a}^n \\ F_n^{sys}(\bar{a}^n) = (\bar{a}^n,s^n)}\right]\\
	 &\overset{(c)}{\leq}\sum\limits_{\bar{a}^n\in \typ{P_{\bar{A}}}}  \frac{1}{\left|\typ{P_{\bar{A}}}\right|} \sum\limits_{s^n \in \mc{S}^n} P_S^n(s^n)  \sum\limits_{y^n}  P_{Y|X}^n(y^n|(\bar{a}^n,s^n))   \left[\exop\left.\left\{\sum\limits_{\substack{\tilde{a}^n \in \mc{A}^n \\ \tilde{a}^n \neq a^n}} L(\tilde{a}^n,a^n)^\eta\right|\substack{A^n=a^n\\ \Phi(a^n) = \bar{a}^n \\ F_n^{sys}(\bar{a}^n) = (\bar{a}^n,s^n)}\right\}\right]^\rho \\
	  &\overset{(d)}{=}  \sum\limits_{y^n} \left\{\sum\limits_{\bar{a}^n \in \typ{P_{\bar{A}}}} \frac{1}{\left|\typ{P_{\bar{A}}}\right|}P_A^n(a^n)^{-\frac{\rho}{1+\rho}} \sum\limits_{s^n} P_S^n(s^n) P(y^n|(\bar{a}^n,s^n))^{\frac{1}{1+\rho}}\right\} \times \\&\left\{\sum\limits_{\substack{\tilde{\tilde{a}}^n \in \mc{A}^n \\ \tilde{\tilde{a}}^n \neq \bar{a}^n}}  \sum\limits_{\bar{s}^n} P_S^n(\tilde{s}^n)  \left(P (y^n|(\tilde{\tilde{a}}^n,\tilde{s}^n)) P_A^n(\tilde{\tilde{a}}^n)\right)^{\frac{1}{1+\rho}}\right\}^\rho   \nonumber \\     
	  &\overset{(e)}{=}  \sum\limits_{y^n} \left\{\frac{2^{n\left(\kl{P_{\bar{A}}}{P_A} + \ent{P_{\bar{A}}}\right)}}{\left|\typ{P_{\bar{A}}} \right|}  \sum\limits_{\bar{a}^n \in \typ{P_{\bar{A}}}}\sum\limits_{s^n} P_S^n(s^n)\left(P_A^n(\bar{a}^n)P(y^n|(\bar{a}^n,s^n))\right)^{\frac{1}{1+\rho}}\right\} \times
	  \\& \left\{\sum\limits_{\substack{\tilde{\tilde{a}}^n \in \mc{A}^n \\ \tilde{\tilde{a}}^n \neq \bar{a}^n}} \sum\limits_{\tilde{s}^n}  P_S^n(\tilde{s}^n)\left(P_A^n(\tilde{\tilde{a}}^n) P (y^n|(\tilde{\tilde{a}}^n,\tilde{s}^n))\right)^{\frac{1}{1+\rho}}\right\}^\rho \nonumber \\ 
	  &\overset{(f)}{\leq} \frac{2^{n\left(\kl{P_{\bar{A}}}{P_A} + \ent{P_{\bar{A}}}\right)}}{\left|\typ{P_{\bar{A}}} \right|}  \sum\limits_{y^n} \left\{\sum\limits_{\bar{a}^n \in \mc{A}^n}\sum\limits_{s^n} P_S^n(s^n)\left(P_A^n(\bar{a}^n)P(y^n|(\bar{a}^n,s^n))\right)^{\frac{1}{1+\rho}}\right\} \times
	  \\& \left\{\sum\limits_{\tilde{\tilde{a}}^n \in \mc{A}^n} \sum\limits_{\tilde{s}^n}  P_S^n(\tilde{s}^n)\left(P_A^n(\tilde{\tilde{a}}^n) P (y^n|(\tilde{\tilde{a}}^n,\tilde{s}^n))\right)^{\frac{1}{1+\rho}}\right\}^\rho \nonumber \\ 
	  &= \frac{2^{n\left(\kl{P_{\bar{A}}}{P_A} + \ent{P_{\bar{A}}}\right)}}{\left|\typ{P_{\bar{A}}} \right|}  \sum\limits_{y^n} \left\{\sum\limits_{\bar{a}^n \in \mc{A}^n}\sum\limits_{s^n} P_S^n(s^n)\left(P_A^n(\bar{a}^n)P(y^n|(\bar{a}^n,s^n))\right)^{\frac{1}{1+\rho}}\right\}^{1+\rho} \\
	  &\overset{(g)}{=}\frac{2^{n\left(\kl{P_{\bar{A}}}{P_A} + \ent{P_{\bar{A}}}\right)}}{\left|\typ{P_{\bar{A}}} \right|}\left[\sum\limits_{y} \left\{\sum\limits_{a,s} P_S(s) \left\{P_A(a)P(y|(a,s)) \right\}^{\frac{1}{1+\rho}} \right\}^{1+\rho}\right]^n \\
	  &\overset{(h)}{=} 2^{-n\left(-\log\sum\limits_{y} \left\{\sum\limits_{a,s} P_S(s) \left\{P_A(a)P(y|(a,s)) \right\}^{\frac{1}{1+\rho}} \right\}^{1+\rho} -\alpha(n) -\kl{P_{\bar{A}}}{P_A}\right)}
	 \end{align}
	\begin{itemize}
		\item (a) uses Markov inequality and is true for any $\rho>0$ and $\eta >0$.
		\item (b) follows from the definitions of  $\Phi$ and $F_n^{sys}$.
		\item (c) follows from Jensen inequality and is true for $0 \leq \rho \leq 1$.
		\item (d) follows from rearrangement of the terms inside the expectation, the replacement $\eta = \frac{1}{1+\rho}$ and the fact that $P_A^n(a^n) = P_A^n(\phi(a^n))$ for all $a^n \in \mc{A}^n$.
		\item In (e) use the fact that $P_A^n(a^n) = P_A^n(\phi(a^n))$ for all $a^n \in \mc{A}^n$  and replace various probability values.
		\item In (f) we extend the summations to $\mc{A}^n$.
		\item (g) follows from product distributions and DMC.
		\item (h) follows by defining $\alpha(n) = \frac{|\mc{A}|\log(n+1)}{n}$ and using \eqref{eq:typsize}.
	\end{itemize}
\end{proof}
 Hence 
  \begin{align}
 P_{e,n} &\leq 2^{-n\left(-\log\sum\limits_{y} \left\{\sum\limits_{a,s} P_S(s) \left\{P_A(a)P(y|(a,s)) \right\}^{\frac{1}{1+\rho}} \right\}^{1+\rho} -\alpha(n) -\kl{P_{\bar{A}}}{P_A}\right)}  \label{eq:pasexp}
 \end{align}
 Optimizing over $\rho$ we get $E_{SM}$. Unfortunately $E_{SM}$ is not useful in establishing achievable rate bounds in the same way as we did for the setups in  Sec.~\ref{sec:sys} and Sec.~\ref{sec:mis}. This is because of the $\frac{\kl{P_{\bar{A}}}{P_A}}{\rho}$ term. 

 In Th. \ref{th:pasg} we introduced a permutation function $\phi_{P_{\bar{A}}}$ which is not the part of a standard PAS system. This function was introduced so that the random coding argument leads to meaningful error exponents; we do not claim that using such a permutation function would bring any gain in a practical system. Furthermore, for a uniform distribution over $\supp(Q_n)$ (as is the case usually for PAS since the input to the distribution matcher, i.e., the source message is uniformly distributed and the distribution matcher is a one-to-one mapping) and for symmetric channels one can show that the $\phi_{P_{\bar{A}}}$ is not needed.
\begin{corollary}
	In a standard PAS system we have $P_A = P_{\bar{A}}$. In this case $ \frac{\kl{P_{\bar{A}}}{P_A}}{\rho} =0$, hence we can use $E_{SM}$ to derive the achievable rates.  Using calculations analogous to the ones in Sec.~\ref{sec:sys} we conclude that $E_{SM} > 0$ as long as $\mi{AS}{Y} > \ent{P_{\bar{A}}}$. Similarly following the same arguments as in  Sec.~\ref{sec:mis} we have $R_n \leq \ent{P_{\bar{A}}}$. For \enquote{close} to uniform distribution over $\supp(Q_n)$ with $\frac{|\supp(Q_n)|}{|\typ{P_{\bar{A}}}|} \to 1$, any rate $R< \mi{X}{Y}$ is achievable, where $P_X=P_{\bar{A}}P_S$.
\end{corollary}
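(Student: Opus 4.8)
The plan is to specialise the exponent $E_{SM}$ of Theorem~\ref{th:pasg} to the matched case $P_A=P_{\bar{A}}$ and then re-run, almost verbatim, the two asymptotic analyses already carried out for the systematic setup in Sec.~\ref{sec:sys} and for the mismatched setup in Sec.~\ref{sec:mis}. First I would set $P_A=P_{\bar{A}}$, so that $\kl{P_{\bar{A}}}{P_A}=0$ and the offending $\kl{P_{\bar{A}}}{P_A}/\rho$ term disappears, leaving
\begin{align}
E_{SM}=\max_{0\leq\rho\leq1}\bigl(g(\rho)-\alpha(n)\bigr),\qquad g(\rho):=-\log\sum_{y}\Bigl\{\sum_{a,s}P_S(s)\{P_{\bar{A}}(a)P(y|(a,s))\}^{\frac{1}{1+\rho}}\Bigr\}^{1+\rho},
\end{align}
which is, up to the vanishing overhead $\alpha(n)$, exactly the inner maximand of the systematic exponent $E_S$ of Theorem~\ref{th:sys} with $P_A$ set to $P_{\bar{A}}$.

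Next I would establish positivity precisely as in Sec.~\ref{sec:sys}: there it is shown that $g(0)=0$ and $\left.\frac{d}{d\rho}g(\rho)\right|_{\rho=0}=\mi{AS}{Y}-\ent{A}$, which in the matched case equals $\mi{X}{Y}-\ent{P_{\bar{A}}}$ since $X=AS$ and $\ent{A}=\ent{P_{\bar{A}}}$. Hence if $\mi{X}{Y}>\ent{P_{\bar{A}}}$ there is some $\rho^\star\in(0,1]$ with $g(\rho^\star)>0$; fixing this $\rho^\star$ and using $\alpha(n)=|\mc{A}|\frac{\log(n+1)}{n}\to0$, for all sufficiently large $n$ one has $E_{SM}\geq g(\rho^\star)-\alpha(n)>0$, so $P_{e,n}\leq2^{-nE_{SM}}\to0$.

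It then remains to control the rate, following Sec.~\ref{sec:mis}. Since $\supp(Q_n)\subseteq\typ{P_{\bar{A}}}$, the upper bound in \eqref{eq:typsize} gives $R_n=\frac{\ent{Q_n}}{n}\leq\frac{1}{n}\log\left|\typ{P_{\bar{A}}}\right|\leq\ent{P_{\bar{A}}}$. For the matching lower bound I would take $Q_n$ uniform over a subset of $\typ{P_{\bar{A}}}$ with $\frac{|\supp(Q_n)|}{|\typ{P_{\bar{A}}}|}\to1$; then $\ent{Q_n}=\log|\supp(Q_n)|$, and the \emph{lower} bound $\left|\typ{P_{\bar{A}}}\right|\geq(n+1)^{-|\mc{A}|}2^{n\ent{P_{\bar{A}}}}$ of \eqref{eq:typsize} yields
\begin{align}
R_n\geq\ent{P_{\bar{A}}}-\alpha(n)+\frac{1}{n}\log\frac{|\supp(Q_n)|}{|\typ{P_{\bar{A}}}|}\xrightarrow{n\to\infty}\ent{P_{\bar{A}}}.
\end{align}
Combined with the upper bound this gives $R_n\to\ent{P_{\bar{A}}}$, and because $E_{SM}>0$ precisely when $\ent{P_{\bar{A}}}<\mi{X}{Y}$, the resulting code sequence certifies that any rate $R<\mi{X}{Y}$ with $P_X=P_{\bar{A}}P_S$ is achievable.

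The argument is a routine limiting calculation; the one point requiring care is the order of limits, since the exponent carries the overhead $\alpha(n)$ and the rate carries the uniformity defect $\frac{1}{n}\log\frac{|\supp(Q_n)|}{|\typ{P_{\bar{A}}}|}$, both of which vanish only as $n\to\infty$. I would resolve this by committing to a fixed $\rho^\star$ with $g(\rho^\star)>0$ \emph{before} letting $n$ grow, so that the strict positive margin $g(\rho^\star)$ dominates $\alpha(n)$ for all large $n$ while $R_n$ is simultaneously driven to $\ent{P_{\bar{A}}}$; no estimate beyond \eqref{eq:typsize} and the two derivative computations of Sec.~\ref{sec:sys} and Sec.~\ref{sec:mis} is needed.
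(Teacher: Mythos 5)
Your proposal is correct and follows essentially the same route as the paper: setting $P_A=P_{\bar{A}}$ to kill the divergence term, reusing the $\rho=0$ derivative computation of Sec.~\ref{sec:sys} (with $\ent{A}=\ent{P_{\bar{A}}}$ and $\mi{AS}{Y}=\mi{X}{Y}$) for positivity of the exponent, and reusing the type-set cardinality bounds \eqref{eq:typsize} from Sec.~\ref{sec:mis} to pin $R_n$ to $\ent{P_{\bar{A}}}$. Your explicit handling of the order of limits (fixing $\rho^\star$ before letting $n\to\infty$ so that $\alpha(n)$ and the uniformity defect are absorbed) is a small but welcome tightening of what the paper leaves implicit.
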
	
\begin{remark}
Following the same lines as in Remark \ref{rem:opt}	we argue that $P_{\bar{A}}$ is a design choice. In the case when $P_A = P_{\bar{A}}$ this leads to the following maximum achievable rate expression. 
\begin{align}
R^* =  \max_{P_A \in \mc{\bar{P}_A},P_S} \mi{X}{Y}  
\end{align}
for $P_X=P_AP_S$  and $\mc{\bar{P}_A}$ being  the set of all distributions of finite type over $\mc{A}$ .
\end{remark}
\begin{remark}
	In practice the procedure to choose $P_{\bar{A}}$ is as follows: for a given channel $P_{Y|X}$, calculate the $P_{A^*}P_S$ which maximizes $\mi{X}{Y}$. Then for the chosen blocklength $n$ for communication, search for the "closest" approximation (for example in terms of divergence \cite{geigeroptimal})of $P_{A^*}$ among the set of all type-$n$ probability distributions. This approximation is then used as $P_{\bar{A}}$
\end{remark}
\begin{remark}
For the scenario in Corollary~\ref{rem:affine} we can restrict $\mc{H}_n$ to be the set of all affine mappings from $\mc{A}^n$ to $\mc{S}^n$ and define a uniform random ensemble $H_n$ of such mappings as done by Gallager \cite[Sec 6.2]{gallager1968information}.  This ensemble then can be used to prove Th.~\ref{th:pasg}.
\end{remark}
\section{Future Work}
 In the future we will analyze PAS for a more general class of distribution matchers. Future work will also focus on improving proof technique in Sec.~\ref{sec:paserror} such that one can also establish error exponent and achievable rates for $P_A \neq P_{\bar{A}}$.
\section*{Acknowledgment}
This work was supported by the German Federal Ministry of Education and Research in the framework of the Alexander von
Humboldt-Professorship. The author would like to thank Gerhard Kramer, Patrick Schulte and Georg B\"ocherer for their useful comments. 
\bibliographystyle{IEEEtran}
\bibliography{IEEEabrv,confs-jrnls,references}

\end{document}